\newtheorem{theorem}{Theorem}[section]
\newtheorem{lemma}[theorem]{Lemma}
\newtheorem{proposition}[theorem]{Proposition}
\newtheorem{conjecture}[theorem]{Conjecture}
\newtheorem{question}[theorem]{Question}
\newtheorem{corollary}[theorem]{Corollary} 
\theoremstyle{definition}
\newtheorem{definition}[theorem]{Definition}
\theoremstyle{remark}
\newtheorem{remark}[theorem]{Remark}
\newtheorem{example}[theorem]{Example}
\DeclareMathOperator{\dist}{dist}
\keywords{Parikh matrix, subword, injectivity problem, Parikh rewriting system, $M$-equivalence}
\subjclass[2000]{68R15, 68Q45, 05A05}
\begin{document}

\title{Parikh matrices and Parikh Rewriting Systems}
\author{Wen Chean Teh}
\address{School of Mathematical Sciences\\
Universiti Sains Malaysia\\
11800 USM\\
Malaysia}
\email{dasmenteh@usm.my}

\maketitle

\begin{abstract}
Since the introduction of the Parikh matrix mapping, its injectivity problem is on top of the list of open problems in this topic. In 2010 Salomaa provided a solution for the ternary alphabet in terms of a Thue system with an additional feature called counter. This paper proposes the notion of a Parikh rewriting system as a generalization and systematization of Salomaa's result. It will be shown that every Parikh rewriting system induces a Thue system without counters that serves as a feasible solution to the injectivity problem.
\end{abstract}

\section{Introduction}

The Parikh matrix mapping \cite{MSSY01} was originally introduced as an extension of the Parikh mapping \cite{rP66}. 
Parikh matrices provide more structural information about words than Parikh vectors and is a useful tool in studying subword occurrences. Two words are $M$-equivalent if{f} they have the same Parikh matrix.
The characterization of $M$-equivalence, also known as the injectivity problem, and the closely related ambiguity problem have garnered the most interest among researchers in this area \cite{aA07,aA10,aA14,AAP08,AMM01a,FR04,MS12,MS04,aS05a,aS10, SY10,vS09,SS06,wT14,wT15,TK14}.
Complete characterization for the case of the binary alphabet was obtained \cite{aA07,FR04}.
However, the case of the ternary alphabet proves to be elusive.

The first complete characterization was provided by Salomaa \cite{aS10} in terms of a Thue system. Nevertheless, the associated rewriting rules do not preserve \mbox{$M$-equivalence}. To compensate, a counter is attached to the Thue system to keep track of the quantity that decides whether the resulting word is \mbox{$M$-equivalent} to the original word. 
Later Atanasiu \cite{aA14} proposed a wider class of $M$-equivalence preserving rewriting rules for the ternary alphabet, based on Salomaa's Thue system. 

This paper introduces the concept of a Parikh rewriting system. It generalizes the above Thue system proposed by Salomaa. 
Then it will be shown that
``irreducible" transformations extracted from a Parikh rewriting system induces a Thue system that  characterizes $M$-equivalence.

The remainder of this paper is structured as follows.  Section~2 provides the basic terminology and preliminary.
Section 3 systematizes the characterization of \mbox{$M$-equivalence} in terms of Thue systems.
The main object of our study, namely the Parikh rewriting system, will be introduced in Section~4.
Irreducible transformations of a Parikh rewriting system will be studied in the subsequent section.
The next two sections involve some case studies for the binary and the ternary alphabet. Our conclusions follow after that.


\section{Parikh Matrices}

We will assume the reader is familiar with the basic notions of formal languages.
The reader is referred to \cite{RS97} if necessary.

Suppose $\Sigma$ is a finite alphabet. The set of words over $\Sigma$ is denoted by $\Sigma^*$. The empty word is denoted by $\lambda$. 
Let $\Sigma^+$ denote the set of nonempty words over $\Sigma$. If $v,w\in \Sigma^*$, the concatenation of $v$ and $w$ is denoted by  $vw$.
An \emph{ordered alphabet} is an alphabet $\Sigma= \{a_1, a_2, \dotsc,a_s\}$ with a total ordering on it. For example, if $a_1<a_2<\dotsb < a_s$, then we may write
$\Sigma= \{a_1<a_2< \dotsb<a_s\}$. 
On the other hand, if $ \Sigma=\{a_1< a_2< \dotsb< a_s\}  $ is an ordered alphabet, then the \emph{underlying alphabet} is  $\{a_1, a_2, \dotsc,a_s\}$.
Frequently, we will abuse notation and use $\Sigma$ to stand for both the ordered alphabet and its underlying alphabet, for example, as in ``$w\in \Sigma^*$" when $\Sigma$ is an ordered alphabet. 
If $w\in \Sigma^*$, then $\vert w\vert$ is the length of $w$  and $w[i]$ is the $i$-th letter of $w$.
For $1\leq i\leq j \leq s$, let $a_{i,j}$ denote the word $a_ia_{i+1}\dotsm a_j$.
Suppose $\Gamma\subseteq \Sigma$. The projective morphism $\pi_{\Gamma}\colon \Sigma^*\rightarrow \Gamma^*$ is defined by
$$\pi_{\Gamma}(a)=\begin{cases}
a, & \text{if } a\in \Gamma\\
\lambda, & \text{otherwise.}
\end{cases}$$
We may write $\pi_{a,b}$ for $\pi_{\{a,b\}}$.

\begin{definition}
A word $w'$ is a \emph{subword} of $w\in \Sigma^*$ if{f} there exist $x_1,x_2,\dotsc, x_n$, $y_0, y_1, \dotsc,y_n\in \Sigma^*$, some of them possibly empty, such that
$$w'=x_1x_2\dotsm x_n \text{ and } w=y_0x_1y_1\dotsm y_{n-1}x_ny_n.$$
\end{definition}

In the literature,  our subwords are usually called ``scattered subwords". A \emph{factor} is a contiguous subword. The number of occurrences of a word $u$ as a subword of $w$ is denoted by $\vert w\vert_u$. 
Two occurrences of $u$ are considered different if{f} they differ by at least one position of some letter. 
For example, $\vert aabab\vert_{ab}=5$ and $\vert baacbc\vert_{abc}=2$.
By convention, $\vert w\vert_{\lambda}=1$ for all $w\in \Sigma^*$.

For any integer $k\geq 2$, let $\mathcal{M}_k$ denote the multiplicative monoid of $k \times k$ upper triangular matrices with nonnegative integral entries and unit diagonal.

\begin{definition} 
Suppose $\Sigma=\{a_1<a_2< \dotsb<a_s\}$ is an ordered alphabet. The \emph{Parikh matrix mapping}, denoted $\Psi_{\Sigma}$, is the monoid morphism
$$ \Psi_{\Sigma}\colon \Sigma^*\rightarrow \mathcal{M}_{s+1}$$
defined as follows:\\
if $\Psi_{\Sigma}(a_q)=(m_{i,j})_{1\leq i,j\leq s+1}$, then $m_{i,i}=1$ for each $1\leq i\leq s+1$, $m_{q,q+1}=1$ and all other entries of the matrix $\Psi_{\Sigma}(a_q)$ are zero. 
Matrices of the form  $\Psi_{\Sigma}(w)$ for $w\in \Sigma^*$ are called \emph{Parikh matrices}.
\end{definition}


\begin{theorem}\cite{MSSY01}
Suppose $\Sigma=\{a_1<a_2< \dotsb<a_s\}$ is an ordered alphabet and $w\in \Sigma^*$. The matrix $\Psi_{\Sigma}(w)=(m_{i,j})_{1\leq i,j\leq s+1}$ has the following properties:
\begin{itemize}
\item $m_{i,i}=1$ for each $1\leq i \leq s+1$;
\item $m_{i,j}=0$ for each $1\leq j<i\leq s+1$;
\item $m_{i,j+1}=\vert w \vert_{a_{i,j}}$ for each $1\leq i\leq j \leq s$.
\end{itemize}
\end{theorem}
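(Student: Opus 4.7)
The plan is to prove all three properties simultaneously by induction on the length of $w \in \Sigma^*$. The first two properties (unit diagonal and zero below it) are purely algebraic facts about $\mathcal{M}_{s+1}$: they hold for each generator $\Psi_\Sigma(a_q)$ and are preserved under products of upper triangular matrices with unit diagonal, so they follow immediately from the fact that $\Psi_\Sigma$ is a monoid morphism into $\mathcal{M}_{s+1}$. The real content is the third property, namely $m_{i,j+1} = \vert w\vert_{a_{i,j}}$, and this is what the induction will focus on.

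For the base case $w = \lambda$, the matrix $\Psi_\Sigma(\lambda)$ is the identity, so $m_{i,j+1} = 0$ whenever $i \leq j$, which matches $\vert \lambda\vert_{a_{i,j}} = 0$ since $a_{i,j}$ is nonempty. For the inductive step, I would write $w = w' a_q$ and compute $\Psi_\Sigma(w) = \Psi_\Sigma(w')\,\Psi_\Sigma(a_q)$. Because $\Psi_\Sigma(a_q)$ differs from the identity only in the entry at position $(q,q+1)$, a direct column-by-column expansion of the product yields
\[
 m_{i,j+1} = \begin{cases} m'_{i,j+1}, & j \neq q,\\ m'_{i,q+1} + m'_{i,q}, & j = q, \end{cases}
\]
where $m'_{i,k}$ denotes the corresponding entries of $\Psi_\Sigma(w')$.

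The combinatorial heart of the argument is the matching identity on subword counts. When $j \neq q$, appending $a_q$ cannot terminate a new occurrence of $a_{i,j}$ (which must end in $a_j$), so $\vert w\vert_{a_{i,j}} = \vert w'\vert_{a_{i,j}}$. When $j = q$, every occurrence of $a_{i,q}$ in $w$ either lies entirely within $w'$ or uses the final letter $a_q$ to close an occurrence, in which case its prefix $a_i \cdots a_{q-1}$ (equal to $a_{i,q-1}$) must occur in $w'$; this gives the recurrence $\vert w\vert_{a_{i,q}} = \vert w'\vert_{a_{i,q}} + \vert w'\vert_{a_{i,q-1}}$. Applying the inductive hypothesis to $m'_{i,q+1}$ and $m'_{i,q}$ matches these two terms, with the minor boundary issue at $i = q$ handled by the convention $\vert w'\vert_\lambda = 1$, which correctly corresponds to $m'_{q,q} = 1$.

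The only subtle point, and the place one has to be careful, is exactly this boundary case $i = j = q$, where the inductive hypothesis cannot be invoked for $m'_{i,q}$ (since then $j-1 < i$) and one must instead rely on the diagonal value being $1$ together with the empty-word convention. Once this is noted, the induction closes and all three properties hold for $\Psi_\Sigma(w)$.
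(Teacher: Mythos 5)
Your proof is correct, and since the paper only cites this theorem from \cite{MSSY01} without reproducing a proof, there is nothing to diverge from: the induction on $|w|$ via the factorization $\Psi_\Sigma(w'a_q)=\Psi_\Sigma(w')(I+E_{q,q+1})$, matched against the subword recurrence $\vert w'a_q\vert_{a_{i,q}}=\vert w'\vert_{a_{i,q}}+\vert w'\vert_{a_{i,q-1}}$, is exactly the standard argument from the original reference. Your handling of the boundary case $i=j=q$ via the convention $\vert w'\vert_\lambda=1$ and the diagonal entry $m'_{q,q}=1$ is the one genuinely delicate point, and you treat it correctly.
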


The \emph{Parikh vector} $\Psi(w)=(\vert w\vert_{a_1}, \vert w\vert_{a_2}, \dotsc, \vert w\vert_{a_s})$ of a word $w\in \Sigma^*$ is contained in the second diagonal\footnote{The \emph{second diagonal} of a matrix in $\mathcal{M}_{k+1}$ is the diagonal of length $k$ immediately above the principal diagonal.}  of the Parikh matrix $\Psi_{\Sigma}(w)$.

\begin{example}
Suppose $\Sigma=\{a<b<c\}$ and $w=abcbac$.
Then \begin{align*}
\Psi_{\Sigma}(w)&=\Psi_{\Sigma}(a)\Psi_{\Sigma}(b)\Psi_{\Sigma}(c)\Psi_{\Sigma}(b)\Psi_{\Sigma}(a)\Psi_{\Sigma}(c)\\
&= \begin{pmatrix}
1 & 1& 0&0 \\
0 &1 & 0 & 0\\
0 & 0 & 1 & 0\\
0 & 0 & 0 & 1
\end{pmatrix}
\begin{pmatrix}
1 & 0 & 0&0 \\
0 &1 & 1 & 0\\
0 & 0 & 1 & 0\\
0 & 0 & 0 & 1
\end{pmatrix}\dotsm
\begin{pmatrix}
1 & 0 & 0&0 \\
0 &1 & 0 & 0\\
0 & 0 & 1 & 1\\
0 & 0 & 0 & 1
\end{pmatrix}\\
&= \begin{pmatrix}
1 & 2& 2& 3 \\
0 &1 & 2 & 3\\
0 & 0 & 1 & 2\\
0 & 0 & 0 & 1
\end{pmatrix}
=\begin{pmatrix}
1 & \vert w\vert_a & \vert w\vert_{ab} & \vert w\vert_{abc} \\
0 &1 & \vert w\vert_b & \vert w\vert_{bc}\\
0 & 0 & 1 & \vert w\vert_c\\
0 & 0 & 0 & 1
\end{pmatrix}.
\end{align*}
\end{example}

\begin{definition}
Suppose $\Sigma=\{a_1<a_2< \dotsb<a_s\}$ is an ordered alphabet.
Two words $w,w'\in \Sigma^*$ are \emph{$M$-equivalent}, denoted $w\equiv_M w'$, if{f} $\Psi_{\Sigma}(w)=\Psi_{\Sigma}(w')$.
A word $w\in \Sigma^*$ is \emph{$M$-ambiguous} if{f} there exists some distinct word $w'\in \Sigma^*$ $M$-equivalent to $w$. Otherwise, $w$ is said to be \emph{$M$-unambiguous}.
\end{definition}

\section{Thue Systems in Relation to $M$-Equivalences}

Characterization of $M$-equivalences in terms of Thue systems has been carried out in \cite{aA07,AAP08,AMM01a,FR04,MS04,aS10}.
However, a systematic treatment of this has not appeared in the literature.
To us a Thue system is defined as follows.

\begin{definition}
A \emph{Thue system} is an ordered pair $(\Sigma, R)$, where $\Sigma$ is an ordered alphabet\footnote{Although unnecessary in this definition, the alphabet is assumed to be ordered out of convenience and compatibility with the next definition.} and $\emptyset \neq R\subseteq \{ \,(y,y') \mid y,y' \in \Sigma^* \text{ and } y\neq y'\,\}$. 
Each element $(y,y')$ of $R$ is associated with two \emph{rewriting rules} of the Thue system, 
which are suggestively denoted by $y\rightarrow y'$ and $y'\rightarrow y$. 
Suppose $w,w'\in \Sigma^*$. We say that $w$ \emph{transforms directly into} $w'$, denoted $w\rightarrow_R w'$, using the rewriting rule $y\rightarrow y'$ if{f} 
$$w=xyz  \text{ and } w'=xy'z \text{ for some } x,z \in \Sigma^*.$$ 
We say $w$ \emph{transforms into} $w'$, denoted $w\Rightarrow_{R} w'$, if{f} 
there exist $w_0,w_1, \dotsc,w_n\in \Sigma^*$ such that
$$w=w_0 \rightarrow_R w_1 \rightarrow_R w_2 \rightarrow_R \dotsb \rightarrow_R w_n=w'.$$
By definition, $w \Rightarrow_{R} w$ for every $w\in \Sigma^*$.
\end{definition}

Note that $\Rightarrow_R$ is an equivalence relation on $\Sigma^*$. The corresponding equivalence class containing the word $w\in \Sigma^*$
will be denoted by $[w]_R$.



\begin{definition}
A Thue system $(\Sigma,R)$ 
is \emph{Parikh sound}
if{f} $w\equiv_M w'$ for every $w, w'\in \Sigma^*$ such that $w\Rightarrow_{R} w'$.
A Thue system $(\Sigma,R)$ is \emph{Parikh complete} 
if{f} $w\Rightarrow_R w'$ for every $w, w'\in \Sigma^*$ such that $w\equiv_M w'$.
\end{definition}

A Thue system $(\Sigma,R)$ is Parikh sound and Parikh complete if and only if
the relation $\Rightarrow_R$ coincides with the $M$-equivalence relation $\equiv_M$.

\begin{proposition}\label{0403b}
Suppose  $(\Sigma,R)$ is a Thue system.
If $y\equiv_M y'$ for every $(y,y')\in R$, then $(\Sigma,R)$ is Parikh sound.
\end{proposition}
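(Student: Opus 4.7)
The plan is to exploit two ingredients that are already in hand: the fact that $\Psi_\Sigma$ is a monoid morphism (so that $M$-equivalence is a congruence on $\Sigma^*$), and the inductive definition of $\Rightarrow_R$ as the reflexive-transitive closure of $\rightarrow_R$. The hypothesis $y \equiv_M y'$ for every $(y,y') \in R$ is symmetric in $y$ and $y'$, which takes care of the fact that each pair in $R$ induces two rewriting rules $y \to y'$ and $y' \to y$.

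First I would prove a one-step lemma: if $w \rightarrow_R w'$, then $w \equiv_M w'$. Unpacking the definition, there exist $(y,y') \in R$ (or its reverse) and words $x, z \in \Sigma^*$ such that $w = xyz$ and $w' = xy'z$. Since $\Psi_\Sigma$ is a monoid morphism,
\begin{equation*}
\Psi_\Sigma(w) = \Psi_\Sigma(x)\Psi_\Sigma(y)\Psi_\Sigma(z) \quad \text{and} \quad \Psi_\Sigma(w') = \Psi_\Sigma(x)\Psi_\Sigma(y')\Psi_\Sigma(z).
\end{equation*}
By hypothesis, $\Psi_\Sigma(y) = \Psi_\Sigma(y')$, hence $\Psi_\Sigma(w) = \Psi_\Sigma(w')$, i.e.\ $w \equiv_M w'$.

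Next, suppose $w \Rightarrow_R w'$. By definition there exist $w_0, w_1, \dotsc, w_n \in \Sigma^*$ with $w = w_0$, $w' = w_n$, and $w_{i-1} \rightarrow_R w_i$ for each $1 \leq i \leq n$. Proceed by induction on $n$. The base case $n = 0$ gives $w = w'$, so trivially $w \equiv_M w'$. For the inductive step, the induction hypothesis yields $w_0 \equiv_M w_{n-1}$, and the one-step lemma gives $w_{n-1} \equiv_M w_n$; transitivity of $\equiv_M$ (which is a genuine equivalence relation because it is defined via equality of Parikh matrices) yields $w \equiv_M w'$.

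There is really no obstacle here; the proposition is essentially the statement that any congruence on $\Sigma^*$ that contains the generating pairs of $R$ also contains $\Rightarrow_R$. The only subtlety worth flagging explicitly is the symmetry of the hypothesis across the two rules $y \to y'$ and $y' \to y$ associated with a single pair $(y,y') \in R$, which is automatic from the symmetry of $\equiv_M$.
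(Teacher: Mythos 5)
Your proof is correct and is essentially the argument the paper has in mind: the paper simply invokes the left-invariance, right-invariance, and transitivity of $M$-equivalence, and your one-step lemma via the morphism property of $\Psi_\Sigma$ together with the induction on the length of the chain is exactly the fleshed-out version of that sketch. Nothing is missing, and your remark about the symmetry of $\equiv_M$ handling both rules $y\rightarrow y'$ and $y'\rightarrow y$ is a sensible point to make explicit.
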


\begin{proof}
The proof is straightforward using the left-invariance, right-invariance, and transitivity of $M$-equivalence.
\end{proof}

\begin{example}
Suppose $\Sigma$ is any ordered alphabet and $R=\{\,   (w,w')\in \Sigma^*\times \Sigma^* \mid w \equiv_M w' \text{ and } w\neq w' \,\}$.
Then the Thue system $( \Sigma,   R     )$ is trivially Parikh sound and Parikh complete.
\end{example}

\begin{example}\cite{aA07,FR04}\label{1506b}
Suppose $\Sigma=\{a<b\}$ and $R=\{ \, (abxba,baxab) \mid x\in \Sigma^*\,\}$.
Then the Thue system $( \Sigma,   R     )$ is Parikh sound and Parikh complete.
\end{example}

\begin{example}\label{0701c}
Suppose $\Sigma=\{a<b<c\}$ and 
$$R=\{ (ac,ca)\}  \cup \{ \, (abxba, baxab) \mid x\in \{a,b\}^*\,\}\cup\{\, (bcxcb, cbxbc) \mid x\in \{b,c\}^*   \, \} .$$
Then the Thue system $( \Sigma, R )$ is Parikh sound but not Parikh complete.
The two words $abbcbacb$ and $bacbabbc$
are $M$-equivalent but $abbcbacb \nRightarrow_R bacbabbc$.

This Thue system $(\Sigma,R)$ was studied in \cite{AAP08}.
The fact that it is not Parikh complete was pointed out in \cite{vS09}, where the counterexample provided is the pair of words
$babcbabcbabcbab$ and $bbacabbcabbcbba$, each with \emph{fifteen} letters.
\end{example}


\begin{theorem}\label{1228d}
Suppose $\Sigma$ is an ordered alphabet with $\vert \Sigma\vert \geq 2$. No Parikh sound Thue system $(\Sigma, R)$ with $R$ finite is Parikh complete.
\end{theorem}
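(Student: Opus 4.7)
The plan is to exhibit, for any finite Parikh sound $(\Sigma, R)$, an $M$-equivalent pair of words that $\Rightarrow_R$ cannot relate. Throughout, let $N$ denote the maximum length of a word appearing in $R$.

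First I would reduce to a binary sub-alphabet. Let $a < b$ be the two smallest letters of $\Sigma$. Since the Parikh vector sits on the second diagonal of the Parikh matrix, Parikh soundness forces every word reachable from an element of $\{a,b\}^*$ to remain in $\{a,b\}^*$. Hence only rules $(y,y') \in R$ with $y, y' \in \{a,b\}^*$ can act along such a derivation; indeed, if $y \in \{a,b\}^*$, Parikh soundness yields $y \equiv_M y'$ (from $y \rightarrow_R y'$), and this automatically forces $y' \in \{a,b\}^*$.

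Next I would establish a structural bound on such rules: every $(y,y') \in R$ with $y, y' \in \{a,b\}^*$ must satisfy $|y|_a \geq 2$ and $|y|_b \geq 2$. If $|y|_a \leq 1$, then $y$ has the shape $b^j$ or $b^i a b^j$, and the entries $|y|_a$, $|y|_b$, $|y|_{ab}$ of its Parikh matrix pin $y$ down uniquely; by Proposition~\ref{0403b} (or directly, since $y \to y'$ is a single-step derivation) $y' \equiv_M y$ then forces $y' = y$, contradicting the definition of a rewriting rule. The symmetric argument gives $|y|_b \geq 2$. The upshot is that no rule can be applied to a word unless some factor of length at most $N$ contains at least two $a$'s (and at least two $b$'s).

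With these preparations in place, I would pick $k > N$ and consider
\[
w = a b^{k+2} a, \qquad w' = b a b^k a b.
\]
A direct Parikh matrix computation (both words have Parikh vector $(2, k+2)$ and $|{\cdot}|_{ab} = k+2$) shows $w \equiv_M w'$. But the two $a$'s in $w$ (resp.\ $w'$) are separated by $k+3$ (resp.\ $k+1$) positions, so any factor of length at most $N$ contains at most one $a$. By the structural bound, no rule in $R$ applies to $w$ or to $w'$, so $[w]_R = \{w\}$ and $[w']_R = \{w'\}$. In particular $w \not\Rightarrow_R w'$ despite $w \equiv_M w'$, contradicting Parikh completeness. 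The only substantive step is the structural observation that Parikh sound binary rules must involve two copies of each letter; the rest is a pure locality argument showing that bounded-radius rewriting cannot bridge two far-separated $a$'s, so I expect the structural step to be the main (though still routine) obstacle.
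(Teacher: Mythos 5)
Your proof is correct and follows essentially the same route as the paper's: both arguments observe that Parikh soundness forces every word occurring in a rewriting rule to be $M$-ambiguous, and then exhibit an $M$-equivalent pair (yours $ab^{k+2}a$ and $bab^{k}ab$, the paper's $aba^{N-1}ba$ and $ba^{N+1}b$) in which the two occurrences of the rare letter are separated by more than $N$ positions, so every factor of length at most $N$ is $M$-unambiguous and no rule applies. Your explicit ``at least two occurrences of each letter'' bound is just a concrete binary restatement of that $M$-ambiguity observation, so the two proofs differ only in bookkeeping.
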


\begin{proof}
Suppose $\Sigma=\{a<b<\dotsb\}$ and $(\Sigma,R)$ is Parikh sound with $R$ finite. Let $N$ be an integer at least the length of every word appearing in some rewriting rule of the Thue system. Let $w=ab\underbrace{a\dotsm a}_{N-1 \text{times}}ba$ and $w'= ba\underbrace{a\dotsm a}_{N-1 \text{times}}ab$.
Then $w$ and $w'$ are $M$-equivalent. However, every factor of $w$ with length at most $N$ is $M$-unambiguous. Since $(\Sigma,R)$ is Parikh sound, every word appearing in its rewriting rules is $M$-ambiguous. Hence, no rewriting rule
of the Thue system can be applied to $w$.
Therefore, $w\nRightarrow_R w'$ and thus $(\Sigma,R)$ is not Parikh complete.
\end{proof}

\section{Parikh Rewriting Systems}

Now, we introduce our main object of study.

\begin{definition}
A \emph{Parikh rewriting system} is a triple $\mathfrak{P}=(\Sigma,R, C )$, where $\Sigma=\{a_1<a_2<\dotsb<a_s\}$ is an ordered alphabet of size at least two, $(\Sigma,R)$ is a Thue system such that each rewriting rule preserves the Parikh vector, and $C\subseteq \{\, a_{i,j} \mid   1\leq i< j \leq s\, \}$. The elements of $C$ are called the \emph{counters}.
Suppose $w, w'\in \Sigma^*$. We say that $w$ \emph{transforms into} $w'$, denoted $w\Rightarrow_{\mathfrak{P}} w'$, if{f} 
$\vert w\vert_c =\vert w'\vert_c$ for all $c\in C$ and $w\Rightarrow_R w'$.
Two Parikh rewriting systems $\mathfrak{P}$ and $\mathfrak{Q}$ are \emph{equivalent} if{f} $\Rightarrow_{\mathfrak{P}}=\Rightarrow_{\mathfrak{Q}}$.
\end{definition}

Note that $\Rightarrow_{\mathfrak{P}}$ is an equivalence relation on $\Sigma^*$.

\begin{definition}
Suppose $\mathfrak{P}=(\Sigma, R, C)$ is a Parikh rewriting system.
\begin{enumerate}
\item Two words $w,w'\in \Sigma^*$ are \emph{$\mathfrak{P}$-equivalent} if{f} $w\Rightarrow_{\mathfrak{P}} w'$.
\item A word $w\in \Sigma^*$ is \emph{$\mathfrak{P}$-ambiguous} if{f} there exists some distinct word $w'\in \Sigma^*$ $\mathfrak{P}$-equivalent to $w$. Otherwise, $w$ is said to be \emph{$\mathfrak{P}$-unambiguous}.
\end{enumerate}
\end{definition}

It is not absolutely obvious that $\mathfrak{P}$-ambiguous words must exist for every Parikh rewriting system $\mathfrak{P}$.

\begin{lemma}\label{1611a}
Suppose $\Sigma$ is any alphabet and suppose $w_1,w_1'\in \Sigma^*$ are distinct and $\Psi(w_1)=\Psi(w_1')$.
Recursively, define $w_{n+1}=w_nw_n'$ and $w_{n+1}'=w_n'w_n$. Then for every positive integer $n$, the words $w_n$ and $w_n'$ are distinct and $\vert w_n\vert_u=\vert w_n'\vert_u$ for every $u\in \Sigma^*$ such that $\vert u\vert \leq n$. 
\end{lemma}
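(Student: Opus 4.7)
The plan is to prove both assertions simultaneously by induction on $n$, with the subword count equality at each level feeding back into the distinctness claim at the next. The base case $n=1$ is immediate: $w_1 \neq w_1'$ is part of the hypothesis, $|w_1|_\lambda = 1 = |w_1'|_\lambda$ by convention, and $|w_1|_a = |w_1'|_a$ for every letter $a$ is precisely the assumption $\Psi(w_1) = \Psi(w_1')$.

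The central tool for the inductive step is the standard concatenation identity for subword counts,
\[
|xy|_u \;=\; \sum_{u = u_1 u_2} |x|_{u_1}\, |y|_{u_2},
\]
where the sum ranges over the $|u|+1$ prefix--suffix factorizations of $u$; this comes from splitting any occurrence of $u$ in $xy$ according to how many of its positions lie in $x$. I would apply this formula to $w_{n+1} = w_n w_n'$ and to $w_{n+1}' = w_n' w_n$, and then compare the two resulting sums term by term for an arbitrary $u$ with $|u| \le n+1$.

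For every \emph{interior} factorization $u = u_1 u_2$ with both $|u_1| \le n$ and $|u_2| \le n$, the inductive hypothesis gives $|w_n|_{u_1} = |w_n'|_{u_1}$ and $|w_n|_{u_2} = |w_n'|_{u_2}$, so these contributions agree on the two sides. The factorizations that are \emph{not} interior are the two boundary ones $u = \lambda \cdot u$ and $u = u \cdot \lambda$, where one factor has length $|u|$ that may equal $n+1$ and the inductive hypothesis does not directly apply. The key observation — and the only substantive point of the proof — is that these two terms contribute $|w_n'|_u + |w_n|_u$ to $|w_{n+1}|_u$ and $|w_n|_u + |w_n'|_u$ to $|w_{n+1}'|_u$, which are manifestly equal. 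Thus $|w_{n+1}|_u = |w_{n+1}'|_u$ for all $u$ with $|u| \le n+1$.

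Distinctness at level $n+1$ follows easily once the subword equality is in hand: applying it to single letters at level $n$ shows $|w_n| = |w_n'|$, so $w_n$ is a prefix of $w_{n+1}$ and $w_n'$ is a prefix of $w_{n+1}'$ of the same length; since these prefixes differ by the inductive hypothesis, so do $w_{n+1}$ and $w_{n+1}'$. I expect no real obstacle beyond the boundary-term bookkeeping described above; everything else is a direct unwinding of the induction.
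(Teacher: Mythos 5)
Your proof is correct and follows essentially the same route as the paper's own argument: induction on $n$, the concatenation identity $\vert xy\vert_u=\sum_{u=u_1u_2}\vert x\vert_{u_1}\vert y\vert_{u_2}$ with the interior factorizations handled by the inductive hypothesis and the two boundary terms contributing the same symmetric sum $\vert w_n\vert_u+\vert w_n'\vert_u$ on both sides, and distinctness from the fact that $w_n$ and $w_n'$ are equal-length distinct prefixes of $w_{n+1}$ and $w_{n+1}'$. No gaps.
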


\begin{remark}
Lemma~\ref{1611a} can be proved easily by induction on $n$. 
The above lemma is implicit in the proof of $\xi(t)<2^t$ that appeared in \cite{MMSSS91}, where $\xi(t)$ is  the maximal length such that any word of length at most $\xi(t)$ over any alphabet of size at least two  is uniquely determined by its $t$-spectrum.
\end{remark}

\begin{proposition}\label{1501a}
$\mathfrak{P}$-ambiguous words exist for any  Parikh rewriting system  $\mathfrak{P}=(\Sigma, R, C )$.
\end{proposition}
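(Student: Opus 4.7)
The plan is to bootstrap from a single rewriting rule in $R$ (which exists since $R\neq \emptyset$) to produce arbitrarily long $\mathfrak{P}$-equivalent pairs, and then to invoke Lemma~\ref{1611a} to kill all the counter constraints simultaneously.

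Concretely, I would start by picking any pair $(y,y')\in R$; by definition of a Thue system $y\neq y'$, and since $\mathfrak{P}$ is a Parikh rewriting system, $\Psi(y)=\Psi(y')$. Set $w_1=y$, $w_1'=y'$, and build the sequence $(w_n,w_n')$ recursively as in Lemma~\ref{1611a} by $w_{n+1}=w_nw_n'$ and $w_{n+1}'=w_n'w_n$. Let $N=\max\{\,\vert c\vert :c\in C\,\}$ (with $N=1$ if $C=\emptyset$). Lemma~\ref{1611a} then guarantees that $w_N\neq w_N'$ and that $\vert w_N\vert_c=\vert w_N'\vert_c$ for every $c\in C$, so the counter requirement from the definition of $\Rightarrow_{\mathfrak{P}}$ is automatic.

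What remains is to verify that $w_n\Rightarrow_R w_n'$ for every $n\geq 1$, which I would prove by induction on $n$. The base case $n=1$ is immediate because $y\rightarrow_R y'$ is a rewriting rule. For the inductive step, I use two simple properties of $\Rightarrow_R$: it is closed under left and right concatenation (a rewriting step performed inside a factor is still a rewriting step in the larger word), and it is symmetric (the definition of a Thue system supplies both $y\rightarrow y'$ and $y'\rightarrow y$ for every pair in $R$). Assuming $w_n\Rightarrow_R w_n'$, and hence also $w_n'\Rightarrow_R w_n$, I chain
\[
w_{n+1}=w_nw_n'\;\Rightarrow_R\; w_n'w_n'\;\Rightarrow_R\; w_n'w_n = w_{n+1}'.
\]
Combined with the counter equality above, this gives $w_N\Rightarrow_{\mathfrak{P}} w_N'$ with $w_N\neq w_N'$, so $w_N$ is $\mathfrak{P}$-ambiguous.

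The only mildly delicate point is the inductive step: one must notice that the construction in Lemma~\ref{1611a} is tailor-made so that $w_{n+1}$ and $w_{n+1}'$ differ only by swapping the two halves $w_n$ and $w_n'$, which is exactly what symmetry of $\Rightarrow_R$ together with compatibility of $\Rightarrow_R$ with concatenation lets us transform between. Everything else (the Parikh vector preservation of each rule, the choice of $N$, and the nonemptiness of $R$) is supplied directly by the definitions.
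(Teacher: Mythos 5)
Your proof is correct and follows essentially the same route as the paper: take a pair from $R$, iterate the construction of Lemma~\ref{1611a} to neutralize all counters, and show $w_n\Rightarrow_R w_n'$ by the same induction $w_nw_n'\Rightarrow_R w_n'w_n'\Rightarrow_R w_n'w_n$ using symmetry and compatibility with concatenation. The only cosmetic difference is that you stop at $N=\max\{\vert c\vert : c\in C\}$ whereas the paper uses $\vert\Sigma\vert$, which bounds the length of every counter anyway.
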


\begin{proof}
Recall that every rewriting rule of $\mathfrak{P}$ preserves the Parikh vector.  Suppose $(w,w')\in R$. By our definition, $w\neq w'$. Let $w_1=w$ and $w_1'=w'$. For every $2\leq n\leq \vert \Sigma\vert$, define 
$w_n$ and $w_n'$  as in Lemma~\ref{1611a}. 
We claim that $w_{\vert \Sigma \vert} \Rightarrow_{\mathfrak{P}} w_{\vert \Sigma\vert}'$.
Since $\vert u\vert \leq \vert \Sigma\vert$ for every $u\in C$, by Lemma~\ref{1611a}, it suffices to show that $w_{\vert \Sigma\vert} \Rightarrow_R w_{\vert \Sigma\vert}'$.
This can be proved easily by induction on $n$. Clearly, $w=w_1 \Rightarrow_R w_1'=w'$.
Assume $w_n \Rightarrow_R w_n'$. By symmetry,  $w_n' \Rightarrow_R w_n$. Therefore, 
$w_{n+1}=w_nw_n' \Rightarrow_R w_n'w_n' \Rightarrow_R w_n'w_n=w_{n+1}'$ as required.
\end{proof}

\begin{definition}
Suppose $\mathfrak{P}=(\Sigma,R, C )$ is a Parikh rewriting system.
We say that $\mathfrak{P}$ is \emph{sound} if{f} $w\equiv_M w'$  for every $w, w'\in \Sigma^*$ such that $w\Rightarrow_{\mathfrak{P}} w'$.
We say that $\mathfrak{P}$ is \emph{complete}
if{f} $w\Rightarrow_{\mathfrak{P}} w'$ for every $w, w'\in \Sigma^*$ such that $w\equiv_M w'$.
\end{definition}

Thus a Parikh rewriting system $\mathfrak{P}$ is sound and complete if and only if $\Rightarrow_{\mathfrak{P}}=\equiv_M$.

\begin{remark}\label{1001b}
If $(\Sigma, R,C)$ is a complete Parikh rewriting system, then $(\Sigma,R)$ is a Parikh complete Thue system.
\end{remark}

\begin{example}\label{1506a}
Suppose $\Sigma=\{a<b\}$ and $R=\{  (ab,ba) \}$.
Then the Parikh rewriting system $(\Sigma, R ,\{ab\})$ is sound and complete. This was Lemma~4 in \cite{MS04}, which appeared equivalently in the following form.
\begin{quote}
Suppose $\Sigma=\{a<b\}$ and $w,w'\in \Sigma^*$. Then $w\equiv_M w'$ if and only if there exist $w_0,w_1,\dotsc, w_n\in \Sigma^*$ such that 
$w=w_0 \rightarrow_R w_1 \rightarrow_R w_2 \rightarrow_R \dotsb \rightarrow_R w_n=w'$ and the number of applications of the rewriting rule  
$ab\rightarrow ba$ equals the number of applications of the rewriting rule $ba\rightarrow ab$.
\end{quote}
\end{example}

\begin{example}\label{2111a}
Suppose $\Sigma=\{a<b<c\}$. 
Let $$R=\{ (ac,ca)\}  \cup \{ \, (abxba, baxab) \mid x\in \Sigma^*\,\}\cup\{\, (bcxcb, cbxbc) \mid x\in \Sigma^*   \, \} .$$
Then the Parikh rewriting system $( \Sigma,   R   ,\{abc\}     )$ is sound and complete.
This is Theorem~9 in \cite{aS10} and represents the first exhaustive solution to the injectivity problem for the ternary alphabet as mentioned in the Introduction. However, the rewriting rules do not preserve $M$-equivalence. This system of Salomaa will be studied in Section~\ref{2703a}.
\end{example}

\begin{example}\label{0901a}
Suppose $\Sigma=\{a<b<c\}$ and $R=\{ (ac, ca), 
 (bc,cb), (ab, ba)\}$. Apparently, $w\Rightarrow_R w'$ for every $w,w'\in\Sigma^*$ with $\Psi(w)=\Psi(w')$. Therefore, the Parikh rewriting system $( \Sigma,   R  ,\{ab,bc, abc\}     )$ is trivially sound and complete.
\end{example}

Example~\ref{0901a} suggests the following question: if finitely many rewriting rules are allowed, is there a sound and complete Parikh rewriting system for the ternary alphabet not involving all three counters? 

\begin{theorem}\label{2912a}
Suppose $\Sigma=\{a<b<c\}$ and $\mathfrak{P}=(\Sigma, R,C)$ is a sound Parikh rewriting system. If $(\Sigma,R)$ is not Parikh sound, then $abc\in C$.
\end{theorem}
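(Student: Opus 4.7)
The plan is to argue by contradiction. Suppose $\mathfrak{P}$ is sound, $(\Sigma,R)$ is not Parikh sound, and $abc\notin C$, so that $C\subseteq\{ab,bc\}$. I will manufacture $U,U'\in\Sigma^*$ with $U\Rightarrow_{\mathfrak{P}} U'$ and $U\not\equiv_M U'$, contradicting soundness. Pick a witness $w\Rightarrow_R w'$ with $w\not\equiv_M w'$. Every rule of $R$ preserves the Parikh vector, so $w$ and $w'$ agree in all Parikh-matrix entries except possibly the three upper-diagonal ones; set $\alpha=|w|_{ab}-|w'|_{ab}$, $\beta=|w|_{bc}-|w'|_{bc}$, $\gamma=|w|_{abc}-|w'|_{abc}$, with $(\alpha,\beta,\gamma)\neq(0,0,0)$.

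If $\alpha=\beta=0$, then $|w|_c=|w'|_c$ for every $c\in C$, so already $w\Rightarrow_{\mathfrak{P}}w'$, while $\gamma\neq 0$ forces $w\not\equiv_M w'$, immediately contradicting soundness. Assume therefore that $(\alpha,\beta)\neq(0,0)$ and pad: for nonnegative integers $m,k$ to be chosen, let $u=a^m w c^k$ and $u'=a^m w' c^k$. Then $u\Rightarrow_R u'$, $\Psi(u)=\Psi(u')$, $|u|_a=|w|_a+m$, $|u|_c=|w|_c+k$, $|u|_{ab}-|u'|_{ab}=\alpha$, and $|u|_{bc}-|u'|_{bc}=\beta$. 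Now set $U=uu'$ and $U'=u'u$. The symmetry of $R$ (each $(y,y')\in R$ yields both $y\to y'$ and $y'\to y$) gives $U=uu'\Rightarrow_R u'u'\Rightarrow_R u'u=U'$.

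Applying the standard concatenation identity
\[
|xy|_v=|x|_v+|y|_v+\sum_{\substack{v_1v_2=v\\ v_1,v_2\in\Sigma^+}}|x|_{v_1}|y|_{v_2}
\]
to $U$ and $U'$ at $v=ab$ and at $v=bc$, and using $\Psi(u)=\Psi(u')$, the cross terms cancel and one obtains $|U|_{ab}=|U'|_{ab}$ and $|U|_{bc}=|U'|_{bc}$. At $v=abc$ the same identity, combined with $|u|_a=|u'|_a$ and $|u|_c=|u'|_c$, yields
\[
|U|_{abc}-|U'|_{abc}=|u|_c\,\alpha-|u|_a\,\beta=(|w|_c+k)\alpha-(|w|_a+m)\beta.
\]
Since $(\alpha,\beta)\neq(0,0)$, the right-hand side is a non-constant affine function of $(m,k)$ and can be made nonzero by a suitable choice (for instance, if $\alpha\neq 0$ fix $m=0$ and pick $k\in\{0,1\}$ so as to avoid the unique equation $(|w|_c+k)\alpha=|w|_a\beta$). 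For that choice $U\not\equiv_M U'$, while $|U|_{ab}=|U'|_{ab}$ and $|U|_{bc}=|U'|_{bc}$ guarantee $|U|_c=|U'|_c$ for every $c\in C\subseteq\{ab,bc\}$, so $U\Rightarrow_{\mathfrak{P}}U'$ and the desired contradiction follows. The main obstacle is the double concatenation bookkeeping that forces the $ab$- and $bc$-differences of $U,U'$ to vanish while keeping the $abc$-difference as the controllable quantity $(|w|_c+k)\alpha-(|w|_a+m)\beta$; once that identity is nailed down, the two free parameters $m,k$ make avoiding the unique bad linear equation routine.
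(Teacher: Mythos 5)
Your proof is correct and takes essentially the same route as the paper's: both handle the easy case $\alpha=\beta=0$ directly, and otherwise form the swapped concatenation of two Parikh-equivalent $\Rightarrow_R$-related words (so that the $ab$- and $bc$-counts cancel by the concatenation identity, i.e. Lemma~\ref{1611a}) while inserting extra letters $a$ and/or $c$ to force the $abc$-counts apart. The only cosmetic difference is that the paper perturbs by a single medial $c$ and applies soundness twice (to $ww'\Rightarrow_{\mathfrak{P}}w'w$ and then to $wcw'\Rightarrow_{\mathfrak{P}}w'cw$), whereas you compute the $abc$-discrepancy $(|w|_c+k)\alpha-(|w|_a+m)\beta$ explicitly and tune $(m,k)$ to make it nonzero in one step.
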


\begin{proof}
We argue by contradiction. Assume $abc\notin C$ and so $C\subseteq \{ab,bc\}$. Since $(\Sigma,R)$ is not Parikh sound, by Proposition~\ref{0403b}, choose $(w,w')\in R$ such that
$w\nequiv_M w'$. Since every rewriting rule of a Parikh rewriting system preserves the Parikh vector, it follows that
$\vert w\vert_u\neq \vert w'\vert_u$ for some $u\in \{ab,bc,abc\}$. If $\vert w\vert_{ab}= \vert w'\vert_{ab}$ 
and $\vert w\vert_{ba}= \vert w'\vert_{ba}$, then $w\Rightarrow_{\mathfrak{P}} w'$ but $w\nequiv_M w'$, contradicting that $\mathfrak{P}$ is sound. 
Otherwise, either $\vert w\vert_{ab}\neq \vert w'\vert_{ab}$ or $\vert w\vert_{bc}\neq \vert w'\vert_{bc}$.
Assume $\vert w\vert_{ab}\neq \vert w'\vert_{ab}$.
Clearly, $ww'\Rightarrow_{R} w'w$  and $wcw'\Rightarrow_{R} w'cw$  because $(w,w')\in R$. By Lemma~\ref{1611a}, $\vert ww'\vert_u=\vert w'w\vert_u$ for $u\in \{ab,bc\}$. Hence, $ww' \Rightarrow_{\mathfrak{P}} w'w$ and since $\mathfrak{P}$ is sound, it follows that 
$\vert ww'\vert_{abc}= \vert w'w\vert_{abc}$.  On the other hand,
$\vert wcw'\vert_{ab}=\vert ww'\vert_{ab}= \vert w'w\vert_{ab}= \vert w'cw\vert_{ab}$ while
$\vert wcw'\vert_{bc}= \vert ww'\vert_{bc}+\vert w\vert_b= \vert w'w\vert_{bc}+\vert w'\vert_b= \vert w'cw\vert_{bc}$. Hence, $wcw' \Rightarrow_{\mathfrak{P}} w'cw$ as well.
However, $\vert wcw'\vert_{abc}= \vert ww'\vert_{abc}+\vert w\vert_{ab}
\neq \vert w'w\vert_{abc}+\vert w'\vert_{ab}=\vert w'cw\vert_{abc}$, contradicting that $\mathfrak{P}$ is sound.
The other case is similar.
\end{proof}

\begin{corollary}
Suppose $\Sigma=\{a<b<c\}$ and $\mathfrak{P}=(\Sigma, R,C)$ is a sound and complete Parikh rewriting system. If $R$ is finite, then $C=\{ab,bc,abc\}$.
\end{corollary}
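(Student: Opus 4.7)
The plan is to establish the three memberships $abc \in C$, $ab \in C$, and $bc \in C$ separately. For $abc \in C$, I would simply chain together the results already in hand: completeness of $\mathfrak{P}$ together with Remark~\ref{1001b} gives that $(\Sigma, R)$ is Parikh complete; since $R$ is finite, Theorem~\ref{1228d} forces $(\Sigma, R)$ not to be Parikh sound; Theorem~\ref{2912a} then delivers $abc \in C$.

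For $ab \in C$, my plan is a proof by contradiction: assume $ab \notin C$, so that $C \subseteq \{bc, abc\}$, and then restrict the system to the subalphabet $\{a, b\}$. Define $R' = \{\,(y, y') \in R \mid y, y' \in \{a, b\}^*\,\}$. The key observation is that since every rewriting rule of $\mathfrak{P}$ preserves the Parikh vector, $y \in \{a, b\}^*$ if and only if $y' \in \{a, b\}^*$, so any $\rightarrow_R$-step applied to a word in $\{a, b\}^*$ stays in $\{a, b\}^*$ and uses only a rule of $R'$. Using the distinct $M$-equivalent pair $abba, baab$ together with completeness of $\mathfrak{P}$, one sees $R' \neq \emptyset$, so $(\{a, b\}, R')$ is a legitimate Thue system; and by the same restriction observation, any pair in $\{a, b\}^*$ that is $M$-equivalent (which is automatically $M$-equivalent over $\Sigma$ because the $c$-involving counts vanish) is connected by $\Rightarrow_{R'}$, so $(\{a, b\}, R')$ is Parikh complete.

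I would then reapply Theorem~\ref{1228d}, this time to the still-finite binary Thue system $(\{a, b\}, R')$, to conclude that it is not Parikh sound. This yields some $(y, y') \in R'$ with $y \nequiv_M y'$, which (Parikh vectors being preserved) forces $|y|_{ab} \neq |y'|_{ab}$. Since $y, y' \in \{a, b\}^*$, the counts $|y|_{bc}$, $|y'|_{bc}$, $|y|_{abc}$, $|y'|_{abc}$ all vanish, and because $ab \notin C$, every counter count agrees, so $y \Rightarrow_{\mathfrak{P}} y'$. But $y \nequiv_M y'$ contradicts soundness of $\mathfrak{P}$. The conclusion $bc \in C$ is perfectly symmetric, restricting instead to $\{b, c\}$ and using $bccb \equiv_M cbbc$ as the seed pair.

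The step most likely to need careful verification is the pair of restriction facts for $(\{a, b\}, R')$ --- that $R'$ is non-empty and that $(\{a, b\}, R')$ is Parikh complete --- but both rest on the single observation that a Parikh-preserving rule cannot move a word across the $\{a, b\}^*$ boundary. After that, the argument is a clean second pass through the Theorem~\ref{1228d}-plus-soundness mechanism already used in Theorem~\ref{2912a}.
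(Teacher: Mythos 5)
Your proposal is correct and follows essentially the same route as the paper: completeness plus Remark~\ref{1001b}, Theorem~\ref{1228d}, and Theorem~\ref{2912a} for $abc\in C$, then restriction to $R'=\{(y,y')\in R\mid y,y'\in\{a,b\}^*\}$ and a second application of Theorem~\ref{1228d} for $ab\in C$ (and symmetrically $bc\in C$). The only cosmetic difference is at the end: the paper shows $(\{a<b\},R')$ is both Parikh sound and Parikh complete and concludes from Theorem~\ref{1228d} that $R'$ is infinite, whereas you conclude from finiteness that $(\{a<b\},R')$ is not Parikh sound and extract a single rule violating soundness of $\mathfrak{P}$ --- a logically equivalent rearrangement.
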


\begin{proof}
Since $\mathfrak{P}$ is complete, by Remark~\ref{1001b}, $(\Sigma, R)$ is a Parikh complete Thue system.
By Theorem~\ref{1228d}, $(\Sigma, R)$ is not Parikh sound. Hence, by Theorem~\ref{2912a}, $abc\in C$.

Now, assume $ab\notin C$. Let 
$R'=     \{\, (y,y')\in R\mid y,y'\in \{a,b\}^*\,         \}$.  Since $(\Sigma,R)$ is Parikh complete, it can be easily deduced that $(\{a<b\}, R')$ is Parikh complete. 
If $w,w'\in \{a,b\}^*$ and $w\Rightarrow_{R'} w'$, then $w\Rightarrow_{\mathfrak{P}} w'$ because $ab\notin C$.
Since $\mathfrak{P}$ is sound, it follows that $(\{a<b\}, R')$ is Parikh sound as well.
By Theorem~\ref{1228d} again,
$R'$ is infinite and so is $R$, a contradiction. Hence, $ab\in C$. Similarly, it can be shown that $bc\in C$.
\end{proof}

\section{Irreducible Transformations}\label{0304a}

In this section, it will be shown that each transformation of a Parikh rewriting system
can be decomposed into a sequence of ``indecomposable" transformations. Before that, we need a particular notion of distance.\footnote{This is distinct from the notion of rank distance, denoted $d_R$, appeared in \cite{aA07,AAP08}.}

\begin{definition}
Suppose $(\Sigma,R)$ is a Thue system, $w,w'\in \Sigma^*$, and $w\Rightarrow_{R} w'$. The \emph{distance between $w$ and $w'$}, denoted $\dist_R(w,w')$,
is zero if $w=w'$; otherwise, it is the least positive integer $n$ such that there exist $w_0,w_1, \dotsc,w_n\in \Sigma^*$ satisfying
$w=w_0 \rightarrow_R w_1 \rightarrow_R w_2 \rightarrow_R \dotsb \rightarrow_R w_n=w'$.
\end{definition}

The distance function $\dist_R$ is indeed a metric on each equivalence class  $[w]_R$.  
When $R$ is understood, we would simply write $\dist(w,w')$.

\begin{definition}
Suppose $\mathfrak{P}=(\Sigma, R,C)$ is a Parikh rewriting system and suppose $w$ and $w'$ are distinct words over $\Sigma$.
We say that the transformation  $w \Rightarrow_{\mathfrak{P}}w'$ is \emph{reducible} 
if{f} there exists $w''\notin \{w,w'\}$ such that $w \Rightarrow_{\mathfrak{P}}w'' \Rightarrow_{\mathfrak{P}} w'$
and $\dist_R(w,w'')+\dist_R(w'',w')=\dist_R(w,w')$. Otherwise, we say that the transformation is \emph{irreducible} and denote this by $w\overset{\text{irr}}{\Rightarrow}_{\mathfrak{P}} w'$. The \emph{order} of an irreducible transformation $w\overset{\text{irr}}{\Rightarrow}_{\mathfrak{P}} w'$ is $n$ if{f} $\dist_R(w,w')=n$.
\end{definition}


\begin{remark}\label{1228c}
By definition, the identity transformations $w\Rightarrow_{\mathfrak{P}} w$ are not regarded as irreducible transformations. Let 
$$m=\min\{ \,\dist(w,w')\mid w,w'\in \Sigma^*, w \Rightarrow_{\mathfrak{P}} w' \text{ and } w\neq w'    \,  \}.$$
By Proposition~\ref{1501a}, $m$ is well-defined.
For every distinct $w,w'\in \Sigma^*$, if $w \Rightarrow_{\mathfrak{P}} w'$ and $\dist(w,w')=m$, then  
$w\overset{\text{irr}}{\Rightarrow}_{\mathfrak{P}} w'$. In particular, if $w\Rightarrow_{\mathfrak{P}} w'$ and $\dist(w,w')=1$, then $w\overset{\text{irr}}{\Rightarrow}_{\mathfrak{P}} w'$.
\end{remark}

It will be shown after Theorem~\ref{3112a} that $\overset{\textnormal{irr}}{\Rightarrow}_{\mathfrak{P}}$ and $\overset{\textnormal{irr}}{\Rightarrow}_{\mathfrak{Q}}$ need not be equal
even though $\mathfrak{P}$ and $\mathfrak{Q}$ are equivalent Parikh rewriting systems.

\begin{theorem}\label{2011a}
Suppose $\mathfrak{P}=(\Sigma, R,C)$ is a Parikh rewriting system and $w$ and $w'$ are distinct words over $\Sigma$. If $w\Rightarrow_{\mathfrak{P}} w'$, then
there exist $w_0, w_1, \dotsc, w_n\in \Sigma^*$ such that 
$$w=w_0 \overset{\textnormal{irr}}{\Rightarrow}_{\mathfrak{P}} w_1 \overset{\textnormal{irr}}{\Rightarrow}_{\mathfrak{P}} w_2 \overset{\textnormal{irr}}{\Rightarrow}_{\mathfrak{P}}\dotsb \overset{\textnormal{irr}}{\Rightarrow}_{\mathfrak{P}} w_n=w'$$
and $\dist_R(w,w')=\sum_{i=1}^n \dist_R(w_{i-1},w_i)$.
\end{theorem}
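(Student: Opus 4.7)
The plan is to prove Theorem~\ref{2011a} by strong induction on $n=\dist_R(w,w')$. Since $w\neq w'$ and $w\Rightarrow_{\mathfrak{P}} w'$ implies $w\Rightarrow_R w'$, the distance $n$ is a well-defined positive integer, so strong induction on $n$ is sensible.

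For the base case $n=1$, Remark~\ref{1228c} already tells us that $w\overset{\textnormal{irr}}{\Rightarrow}_{\mathfrak{P}} w'$. So we may take the trivial chain $w=w_0\overset{\textnormal{irr}}{\Rightarrow}_{\mathfrak{P}} w_1=w'$, for which the distance sum trivially equals $1=\dist_R(w,w')$.

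For the inductive step with $n>1$, there are two possibilities. If $w\Rightarrow_{\mathfrak{P}} w'$ is already irreducible, we are done at once with the one-step chain. Otherwise, by definition there exists some $w''\notin\{w,w'\}$ with $w\Rightarrow_{\mathfrak{P}} w''\Rightarrow_{\mathfrak{P}} w'$ and $\dist_R(w,w'')+\dist_R(w'',w')=n$. Because $w''\neq w$ and $w''\neq w'$, each of $\dist_R(w,w'')$ and $\dist_R(w'',w')$ is at least $1$ and therefore strictly less than $n$. Apply the induction hypothesis separately to $w\Rightarrow_{\mathfrak{P}} w''$ and to $w''\Rightarrow_{\mathfrak{P}} w'$ to obtain chains
\[
w=u_0\overset{\textnormal{irr}}{\Rightarrow}_{\mathfrak{P}}u_1\overset{\textnormal{irr}}{\Rightarrow}_{\mathfrak{P}}\dotsb\overset{\textnormal{irr}}{\Rightarrow}_{\mathfrak{P}}u_k=w''
\]
and
\[
w''=v_0\overset{\textnormal{irr}}{\Rightarrow}_{\mathfrak{P}}v_1\overset{\textnormal{irr}}{\Rightarrow}_{\mathfrak{P}}\dotsb\overset{\textnormal{irr}}{\Rightarrow}_{\mathfrak{P}}v_\ell=w'
\]
with $\sum_{i=1}^{k}\dist_R(u_{i-1},u_i)=\dist_R(w,w'')$ and $\sum_{j=1}^{\ell}\dist_R(v_{j-1},v_j)=\dist_R(w'',w')$. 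Concatenating the two chains yields the desired decomposition; the telescoping sum of distances equals $\dist_R(w,w'')+\dist_R(w'',w')=n=\dist_R(w,w')$, as required.

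I do not anticipate a serious obstacle: the argument is essentially a structured strong induction driven directly off the definition of reducibility, and the key quantitative invariant (the additivity of distances along the chosen decomposition) is built into the definition of a reducible transformation. The only thing to verify carefully is that irreducibility is respected by the induction, which it is because the two sub-transformations $w\Rightarrow_{\mathfrak{P}} w''$ and $w''\Rightarrow_{\mathfrak{P}} w'$ are themselves $\mathfrak{P}$-transformations (inherited from the definition of reducibility), so the induction hypothesis applies verbatim.
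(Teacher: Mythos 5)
Your proof is correct and follows essentially the same route as the paper's: strong induction on $\dist_R(w,w')$, with the base case handled by Remark~\ref{1228c}, and the inductive step splitting a reducible transformation at the witness $w''$ whose existence and distance-additivity come straight from the definition of reducibility, then concatenating the two chains given by the induction hypothesis. No gaps; your explicit observation that both sub-distances are at least $1$ and hence strictly less than $n$ is exactly the point the paper passes over with ``clearly.''
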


\begin{proof}
First of all, if $w\Rightarrow_{\mathfrak{P}} w'$ is irreducible, then the conclusion follows with \linebreak
$n=1$. We prove by induction on $\dist(w,w')$. If $\dist(w,w')$ equals one, then $w\Rightarrow_{\mathfrak{P}} w'$ is irreducible and we are done.
For the induction step, assume $w\Rightarrow_{\mathfrak{P}} w'$ is reducible.
Choose $w''\notin \{ w,w'\}$ such that $w \Rightarrow_{\mathfrak{P}}w'' \Rightarrow_{\mathfrak{P}} w'$
and $\dist(w,w'')+\dist(w'',w')=\dist(w,w')$. Clearly,  
$\dist(w,w'')<\dist(w,w')$ and $\dist(w'',w')<\dist(w,w')$. By the induction hypothesis, choose $u_0,u_1,\dotsc, u_p\in \Sigma^*$ such that 
$w=u_0 \overset{\textnormal{irr}}{\Rightarrow}_{\mathfrak{P}} u_1 \overset{\textnormal{irr}}{\Rightarrow}_{\mathfrak{P}} u_2 \overset{\textnormal{irr}}{\Rightarrow}_{\mathfrak{P}}\dotsb \overset{\textnormal{irr}}{\Rightarrow}_{\mathfrak{P}} u_p=w''$ and 
$\dist(w,w'')=\sum_{i=1}^p \dist(u_{i-1},u_i)$.
Similarly, choose $v_0,v_1,\dotsc, v_q\in \Sigma^*$ such that 
$w''=v_0 \overset{\textnormal{irr}}{\Rightarrow}_{\mathfrak{P}} v_1 \overset{\textnormal{irr}}{\Rightarrow}_{\mathfrak{P}} v_2 \overset{\textnormal{irr}}{\Rightarrow}_{\mathfrak{P}}\dotsb \overset{\textnormal{irr}}{\Rightarrow}_{\mathfrak{P}} v_q=w'$ and 
$\dist(w'',w')=\sum_{j=1}^q \dist(v_{j-1},v_j)$.
Therefore, 
$$w=u_0 \overset{\text{irr}}{\Rightarrow}_{\mathfrak{P}} u_1 \overset{\text{irr}}{\Rightarrow}_{\mathfrak{P}} \dotsb \overset{\text{irr}}{\Rightarrow}_{\mathfrak{P}} u_p=w''=v_0 \overset{\text{irr}}{\Rightarrow}_{\mathfrak{P}} v_1 \overset{\text{irr}}{\Rightarrow}_{\mathfrak{P}} \dotsb \overset{\text{irr}}{\Rightarrow}_{\mathfrak{P}}  v_q=w'$$
 and $\dist(w,w')=\sum_{i=1}^p \dist(u_{i-1},u_i)+\sum_{j=1}^q \dist(v_{j-1},v_j)$ as required.
\end{proof}

It is not surprising that the sequence of irreducible transformations guaranteed by Theorem~\ref{2011a} is not unique. 
However, the non-uniqueness of the length of that sequence is not so clear and will be adrressed later.



\begin{theorem}\label{1501c}
Suppose $\mathfrak{P}=(\Sigma, R,C)$ is a sound and complete Parikh rewriting system. 
Let $R'=\{\,(w,w') \mid w \overset{\text{irr}}{\Rightarrow}_{\mathfrak{P}}    w' \,  \}$.
Then $(\Sigma, R')$ is a Parikh sound and Parikh complete Thue system.
\end{theorem}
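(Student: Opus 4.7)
The plan is to verify the two defining properties of a Parikh sound and Parikh complete Thue system separately, after a quick check that $R'$ is non-empty so that $(\Sigma, R')$ is a well-formed Thue system in the sense of the paper. Non-emptiness is immediate from Proposition~\ref{1501a}: there exist distinct $\mathfrak{P}$-equivalent words $w$ and $w'$, and Theorem~\ref{2011a} decomposes $w\Rightarrow_{\mathfrak{P}} w'$ into a chain of at least one irreducible transformation, providing a pair for $R'$.

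For Parikh soundness, I would apply Proposition~\ref{0403b} directly. Every pair $(y,y')\in R'$ satisfies $y \overset{\text{irr}}{\Rightarrow}_{\mathfrak{P}} y'$, which in particular gives $y\Rightarrow_{\mathfrak{P}} y'$; since $\mathfrak{P}$ is sound, $y\equiv_M y'$. Proposition~\ref{0403b} then yields that $(\Sigma, R')$ is Parikh sound.

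For Parikh completeness, take any $w,w'\in \Sigma^*$ with $w\equiv_M w'$. Completeness of $\mathfrak{P}$ gives $w\Rightarrow_{\mathfrak{P}} w'$. Applying Theorem~\ref{2011a} produces $w_0,w_1,\dotsc,w_n\in \Sigma^*$ with $w=w_0 \overset{\text{irr}}{\Rightarrow}_{\mathfrak{P}} w_1 \overset{\text{irr}}{\Rightarrow}_{\mathfrak{P}} \dotsb \overset{\text{irr}}{\Rightarrow}_{\mathfrak{P}} w_n = w'$. Each $(w_{i-1},w_i)$ lies in $R'$, so taking the trivial context $x=z=\lambda$ in the definition of direct transformation yields $w_{i-1}\rightarrow_{R'} w_i$. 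Concatenating these one-step transformations produces $w\Rightarrow_{R'} w'$, establishing Parikh completeness.

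I anticipate no substantive obstacle: the theorem is essentially a repackaging of the decomposition Theorem~\ref{2011a} into the language of Thue systems, with soundness and completeness of $\mathfrak{P}$ used only to ensure that the relation $\Rightarrow_{\mathfrak{P}}$ coincides with $\equiv_M$ and that each irreducible step preserves $M$-equivalence. The only minor care needed is the non-emptiness check for $R'$ and the observation that irreducibility is symmetric in the two endpoints (so that using the rule $y\rightarrow y'$ or $y'\rightarrow y$ is not an issue), but both are automatic.
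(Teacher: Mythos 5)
Your proposal is correct and follows essentially the same route as the paper: soundness via Proposition~\ref{0403b} applied to the pairs of $R'$ (each of which is $M$-equivalent by soundness of $\mathfrak{P}$), and completeness by decomposing $w\Rightarrow_{\mathfrak{P}} w'$ into irreducible steps using Theorem~\ref{2011a}. Your extra checks (non-emptiness of $R'$ and symmetry of irreducibility) are sound housekeeping that the paper leaves implicit.
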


\begin{proof}
Suppose $w,w'\in \Sigma^*$. If $(w, w')\in R'$, then $w \Rightarrow_{\mathfrak{P}} w'$. Hence, $w\equiv_M w'$ because $\mathfrak{P}$ is sound.
Thus  $(\Sigma, R')$ is Parikh sound by Proposition~\ref{0403b}. 
Conversely, if $w\equiv_M w'$, then  $w \Rightarrow_{\mathfrak{P}} w'$ because $\mathfrak{P}$ is complete.
By Theorem~\ref{2011a}, it follows that $w\Rightarrow_{R'} w'$. Thus $(\Sigma, R')$ is Parikh complete.
\end{proof}

For some Parikh rewriting systems, the set of irreducible transformations may have a simple explicit description (see the very first theorem in the next section).
Therefore, Theorem~\ref{1501c} provides a plausible mean to obtain Parikh sound and Parikh complete Thue systems.

\section{The Binary Alphabet: a Case Study}\label{1703a}

In this section we study some Parikh rewriting systems for the binary alphabet. 
Suppose $\Sigma=\{a<b\}$ is a fixed ordered alphabet.


\begin{theorem}\label{3012a}
Suppose $\mathfrak{P}=(\Sigma, \{(ab,ba)\} ,\{ab\})$ and $(\Sigma, R')$ is the Thue system where $R'=\{ \, (abxba, baxab) \mid x\in \Sigma^*\,\}$.  Then for every $w,w'\in\Sigma^*$,
$$ w \overset{\textnormal{irr}}{\Rightarrow}_{\mathfrak{P}} w' \text{ if and only  if } w\rightarrow_{R'} w'.$$
\end{theorem}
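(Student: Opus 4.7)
The plan hinges on one observation: each rule in $R=\{(ab,ba)\}$ changes $\vert\cdot\vert_{ab}$ by exactly $\pm 1$, while $\mathfrak{P}$ requires $\vert\cdot\vert_{ab}$ to be preserved. Hence whenever $w\Rightarrow_{\mathfrak{P}} w'$, the distance $\dist_R(w,w')$ is even; in particular, every nontrivial $\mathfrak{P}$-transformation has $R$-distance at least $2$. Moreover, when $\dist_R(w,w')=2$, the transformation is automatically irreducible: any intermediate $w''\notin\{w,w'\}$ satisfying the reducibility condition would force $\dist_R(w,w'')=\dist_R(w'',w')=1$, which contradicts counter preservation.

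For the implication $w\to_{R'}w' \Rightarrow w\overset{\textnormal{irr}}{\Rightarrow}_{\mathfrak{P}}w'$, write $w=u(abxba)v$ and $w'=u(baxab)v$. A direct count of $ab$-subwords, stratified by which of the blocks $u$, $abxba$, $v$ each of the two letters lies in, gives $\vert w\vert_{ab}=\vert w'\vert_{ab}$; and the path $u(abxba)v\to_R u(baxba)v\to_R u(baxab)v$ shows $\dist_R(w,w')\le 2$. Parity together with $w\ne w'$ forces equality, and the preliminary observation then supplies irreducibility.

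For the converse, suppose $w\overset{\textnormal{irr}}{\Rightarrow}_{\mathfrak{P}} w'$ and let $n=\dist_R(w,w')$. Deferring the proof that $n=2$, one proceeds with any minimum path $w\to_R w_1\to_R w'$; by parity its two signs are one $+1$ ($ba\to ab$) and one $-1$ ($ab\to ba$). A case analysis on the rewrite positions $p_1,p_2$ eliminates $p_1=p_2$ (which forces $w=w'$) and $\vert p_1-p_2\vert=1$ (the overlapping letters cannot match both rewrite patterns simultaneously). Hence $\vert p_1-p_2\vert\ge 2$, the two rewrites act on disjoint factors separated by some word $\beta$, and a direct inspection identifies the unordered pair $\{w,w'\}$ with $\{u(ab\beta ba)v,\,u(ba\beta ab)v\}$ for suitable $u,v$, giving $w\to_{R'}w'$ since the $R'$-rule is bidirectional.

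The main obstacle is to derive a contradiction from $n\ge 4$. Fix a minimum path $w=w_0,\dots,w_n=w'$ with sign sequence $\epsilon_1,\dots,\epsilon_n$ summing to zero, and set $S_i=\epsilon_1+\cdots+\epsilon_i$. If some $S_i$ with $0<i<n$ vanishes, then $\vert w_i\vert_{ab}=\vert w\vert_{ab}$, and additivity of $\dist_R$ along a minimum path (triangle inequality with $\dist_R(w,w_i)\le i$ and $\dist_R(w_i,w')\le n-i$) makes $w''=w_i$ reduce the $\mathfrak{P}$-transformation, contradicting irreducibility. Otherwise all intermediate $S_i$ share a common sign; WLOG positive, so $\epsilon_1=+1$ and there is a least index $j\ge 3$ with $\epsilon_j=-1$. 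The adjacent opposite pair $(\epsilon_{j-1},\epsilon_j)$ must act at positions whose difference is at least $2$: equal positions would shorten the path (violating minimality), and difference $1$ forces an incompatible letter at the overlap, exactly as in the previous paragraph. The two rewrites therefore commute, and swapping them yields another minimum path in which the first $-1$ sign has moved one step left. Repeating at most $j-2$ such swaps produces a minimum path whose signs begin $+1,-1$, giving $S_2=0$ on that path, and the first clause of this paragraph yields the required contradiction.
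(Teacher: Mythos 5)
Your proof is correct and follows essentially the same route as the paper: counter preservation forces even $R$-distance, distance-two transformations are irreducible, and any minimal path of length at least four is reduced by commuting adjacent opposite-sign rewrites (justified by the non-overlap observation) until an intermediate word with the same $ab$-count appears. Your partial-sum bookkeeping and the explicit verification that $w\rightarrow_{R'}w'$ is equivalent to ``$w\Rightarrow_{\mathfrak{P}}w'$ and $\dist_R(w,w')=2$'' merely fill in details the paper leaves as ``clear.''
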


\begin{proof}
Let $R=\{(ab,ba)\}$. By our definition, neither $w\overset{\textnormal{irr}}{\Rightarrow}_{\mathfrak{P}} w$ nor $w \rightarrow_{R'}w $  is true for each $w\in \Sigma^*$.
Suppose $w,w'\in\Sigma^*$ are distinct. First, we show that $w \overset{\textnormal{irr}}{\Rightarrow}_{\mathfrak{P}} w'$ if and only if 
 $w\Rightarrow_{\mathfrak{P}} w'$ and $\dist_R(w,w')=2$.
Clearly, if $w\Rightarrow_{\mathfrak{P}} w'$ and $\dist_R(w,w')=2$, then this transformation is irreducible by Remark~\ref{1228c}.

Conversely, assume $w \overset{\textnormal{irr}}{\Rightarrow}_{\mathfrak{P}} w'$ and so $w \Rightarrow_{\mathfrak{P}} w'$. Since $\vert w\vert_{ab}=\vert w'\vert_{ab}$, it follows that $\dist_R(w,w')$ must be even because the number of applications of the rewriting rule $ab\rightarrow ba$ must equal the number of applications of the rewriting rule $ba\rightarrow ab$. Assume $\dist_R(w,w')=2l $ for some $l>1$.
Suppose $w=w_0\rightarrow_R w_1 \rightarrow_R w_2\rightarrow_R \dotsb \rightarrow_R w_{2l} =w'$ for some $w_0,w_1, \dotsc, w_{2l}\in \Sigma^*$. 
Note that if  both the rewriting rules $ab\rightarrow ba$ and $ba\rightarrow ab$ are involved in some two consecutive direct transformations $w_i \rightarrow_R w_{i+1} \rightarrow_R w_{i+2} $, then the two rewriting rules must be applied to non-overlapping positions; otherwise, $w_i=w_{i+2}$ and $\dist_R(w,w')$ would have been less than $2l$. In this case, we may apply the rules in reverse order and obtain 
$w_i \rightarrow_R w_{i+1}' \rightarrow_R w_{i+2} $ for some $w_{i+1}'\in \Sigma^*$.
Therefore, commuting the applications of the rules $ab\rightarrow ba$ and $ba\rightarrow ab$ as many times as necessary,
we may assume that both rewriting rules are involved in the first two direct transformations $w_0 \rightarrow_R w_1 \rightarrow_R w_2 $.
However, this implies that $w\Rightarrow_{\mathfrak{P}} w_2 \Rightarrow_{\mathfrak{P}} w'$
and $\dist_R(w,w')=\dist_R(w,w_2)+\dist_R(w_2,w')$. Therefore,
$w \Rightarrow_{\mathfrak{P}} w'$ is reducible, a contradiction.

By now it should be clear that $w\rightarrow_{R'} w'$ if and only if    $w\Rightarrow_{\mathfrak{P}} w'$ and $\dist_R(w,w')=2$.
 Therefore, $w \overset{\textnormal{irr}}{\Rightarrow}_{\mathfrak{P}} w'$ if and only if $w\rightarrow_{R'} w'$ as required.
\end{proof}

Since $\mathfrak{P}=(\Sigma, \{(ab,ba)\} ,\{ab\})$ is a sound and complete Parikh rewriting system (see Example~\ref{1506a}), Theorem~\ref{1501c} and Theorem~\ref{3012a} together provide an alternative proof of the fact stated in Example~\ref{1506b}.
Furthermore, the proof of Theorem~\ref{3012a} shows that every irreducible transformation of $\mathfrak{P}$ has order two.


Although $(\Sigma, \{(ab,ba)\} ,\{ab\})$ is the most simple and natural Parikh rewriting system for the binary alphabet, it is intriguing
what other combination of rewriting rules such that the rule $ab\rightarrow ba$ is not already included may lead to. In view of Theorem~\ref{1501c}, our pursuit is accompanied by hope for a discovery of a new Parikh sound and Parikh complete Thue system for the binary alphabet.

For the remaining of this section, fix $R_1=\{(abb, bab), (bab, bba), (bba, abb)\}$
and $R_2= \{(baa, aba), (aba, aab), (aab, baa)\}$.

For each $i=1,2$, it is easy to verify that $abba\nRightarrow_{R_i} baab$. Since  $abba \equiv_M baab$, it follows that neither $(\Sigma, R_1)$ nor $(\Sigma, R_2)$  is a Parikh complete Thue system.

\begin{theorem}\label{3112a}
 If $R\subseteq R_1\cup R_2$, $R\cap R_i\neq \emptyset$ for each $i=1,2$, and $\vert R\cap R_i\vert \geq 2$ for some $i=1,2$, then $\mathfrak{P}=( \Sigma, R, \{ab\})$ is 
 a sound and complete Parikh rewriting system.
\end{theorem}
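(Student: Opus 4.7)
The approach is twofold: soundness is established directly, and completeness is obtained by reducing to the sound and complete system of Example~\ref{1506a}. For soundness, every rule of $R_1 \cup R_2$ preserves both $|w|_a$ and $|w|_b$, while the counter $\{ab\}$ forces $|w|_{ab}$ to be preserved along any $\mathfrak{P}$-transformation. Since the Parikh matrix of a binary word is determined by the triple $(|w|_a, |w|_b, |w|_{ab})$, we obtain $\Rightarrow_\mathfrak{P}\, \subseteq\, \equiv_M$.

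For completeness, Example~\ref{1506a} gives $\Rightarrow_{\mathfrak{P}'}\,=\,\equiv_M$ where $\mathfrak{P}' = (\Sigma, \{(ab, ba)\}, \{ab\})$, and by Theorems~\ref{2011a} and~\ref{3012a} the relation $\Rightarrow_{\mathfrak{P}'}$ is generated by direct rewrites $u \cdot abxba \cdot v \to u \cdot baxab \cdot v$ for $x, u, v \in \Sigma^*$. Since any $\mathfrak{P}$-derivation embeds in arbitrary context (the counter constraint at the endpoints is automatic because $|abxba|_{ab} = |baxab|_{ab}$), the task reduces to proving $abxba \Rightarrow_\mathfrak{P} baxab$ for every $x \in \Sigma^*$. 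The involution $\sigma$ composing the letter swap $a \leftrightarrow b$ with word reversal preserves $|w|_{ab}$, sends $abxba \leftrightarrow ba\sigma(x)ab$, and interchanges $R_1$ with $R_2$ at the level of rule pairs (for example $(abb, bab) \mapsto (aab, aba)$); accordingly, we may assume without loss of generality that $|R \cap R_1| \ge 2$ and $|R \cap R_2| \ge 1$.

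The assumption $|R \cap R_1| \ge 2$ yields what I call \emph{$R_1$-flexibility}: for every $v, u \in \Sigma^*$ and every $y, y' \in \{abb, bab, bba\}$, one has $v y u \Rightarrow_R v y' u$, since any two of the three pairs in $R_1$ already connect all three words inside the triangle. Armed with this, I prove $abxba \Rightarrow_\mathfrak{P} baxab$ by induction on $|x|$. The base $x = \lambda$ requires $abba \Rightarrow_\mathfrak{P} baab$ and splits into three subcases according to which pair of $R_2$ lies in $R$, each admitting a short two-step derivation whose endpoints share $|w|_{ab} = 2$: $abba \to baba \to baab$ when $(aba, aab) \in R$, $abba \to abab \to baab$ when $(baa, aba) \in R$, and $abba \Rightarrow_R aabb \to baab$ when $(aab, baa) \in R$. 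In each case the first arrow is an $R_1$-flexibility move (of length one or two) and the second uses the available $R_2$-rule. For the inductive step, $R_1$-flexibility is used at one boundary of $abxba$ to rewrite the triple $abb$ (or $bba$) into $bab$, shifting the discrepancy with $baxab$ inward; a symmetric move at the other boundary completes the reduction in favourable cases (for example when $x$ begins and ends with $b$, the two flexibility steps $abbybba \to babybba \to babybab$ already do the job, with $|w|_{ab}$ changes $-1$ and $+1$), and in less favourable cases the single available $R_2$-rule is woven in.

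The main obstacle is the inductive step when $R \cap R_2$ consists of a single pair whose left-hand side does not appear at the relevant boundary of the current word. The remedy is a longer chain of $R_1$-moves that first transports a suitable $R_2$-compatible substring into position, applies the unique $R_2$-rule, and then transports back; for instance, with $(baa, aba) \in R$ and $x = ba$, the derivation $abbaba \to bababa \to baabba \to baabab \to babaab$ works, traversing $|w|_{ab}$ values $4, 3, 4, 5, 4$. A systematic subcase analysis on the first and last letters of $x$ together with which pair of $R_2$ is in $R$ then shows that such derivations exist uniformly in $x$, with the net change in $|w|_{ab}$ across the whole derivation always zero so that the counter $\{ab\}$ is honoured.
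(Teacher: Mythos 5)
Your overall architecture coincides with the paper's: soundness from the fact that $(\vert w\vert_a,\vert w\vert_b,\vert w\vert_{ab})$ determines the Parikh matrix over a binary alphabet; completeness by reducing, via the Parikh complete system $R'=\{\,(abxba,baxab)\mid x\in\Sigma^*\,\}$, to the single claim $abxba\Rightarrow_R baxab$ (the counter condition being automatic because it is only checked at the endpoints); the symmetry reduction to $\vert R\cap R_1\vert\geq 2$; and the observation that any two pairs of $R_1$ generate the third, so one may assume $R_1\subseteq R$. Your base-case derivations and the $x=ba$ example are correct.

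The gap is the general step, which is where essentially all the work lies and which you replace by the assertion that a ``systematic subcase analysis on the first and last letters of $x$ together with which pair of $R_2$ is in $R$'' produces the required derivations uniformly in $x$. This is not substantiated, the induction is never actually set up (no inductive hypothesis is stated, and your ``favourable case'' reaches the target directly rather than reducing $\vert x\vert$), and the proposed parameterization is too coarse. The paper's proof is not an induction on $\vert x\vert$ at all: for each choice of the available $R_2$-pair it gives an explicit uniform derivation that factors $wb$ as $aw'bw''$ with $w'\in a^*$, uses the $R_2$-rule to shuttle the first $b$ of $wb$ across the $a$-run $w'$ to the boundary, performs the boundary swap with $R_1$-moves, and shuttles back. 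Crucially, when the only available pair is $(aab,baa)$, that rule moves a $b$ past \emph{two} $a$'s at a time, so whether the $b$ arrives at the required position depends on the parity of $\vert w'\vert$, and an extra $R_1$-move must be inserted exactly when the parity is wrong (the ``possibly the rule $bab\rightarrow bba$ once (if $\vert w'\vert$ is even)'' step in the paper's Case~2). Your sketch never confronts this parity obstruction --- consider $x\in a^*$, where the interior of $abxba$ contains no $b$ at all and the first and last letters of $x$ carry none of the relevant information --- so the claim that the remaining cases go through uniformly is precisely the part that still needs a proof.
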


\begin{proof}
Suppose $w,w'\in \Sigma^*$. We need to show $w \Rightarrow_{\mathfrak{P}} w'$ if and only if $w\equiv_M w'$.
If $w \Rightarrow_{\mathfrak{P}} w'$, then $\vert w\vert_{ab}=\vert w'\vert_{ab}$. Since every rewriting rule of $\mathfrak{P}$ preserves the Parikh vector, it follows that $w\equiv_M w'$. Conversely, assume $w\equiv_M w'$. To see that $w \Rightarrow_{\mathfrak{P}} w'$, since $(\Sigma,R')$ is Parikh complete,  where 
$R'=\{ ( abxba, baxab) \mid x\in \Sigma^*\}$, it suffices to show that $\rightarrow_{R'}\subseteq \Rightarrow_{\mathfrak{P}}$.

Therefore, it remains to show that $abwba \Rightarrow_{\mathfrak{P}} bawab$ for every $w\in \Sigma^*$. 
Fix arbitrary $w\in \Sigma^*$. Since $\vert abwba\vert_{ab}=\vert bawab\vert_{ab}$, it suffices to show that $abwba \Rightarrow_R bawab$.
We may assume $\vert R\cap R_1\vert \geq 2$ as the other case mirrors this.
Furthermore, any rewriting rule induced by an ordered pair in $R_1$ produces the same effect as a combination of rewriting rules induced by the other two ordered pairs in $R_1$.
Hence, without loss of generality, we may further assume that $R_1\subseteq R$.

\begin{itemize}
\item[Case 1:] $R \supseteq R_1\cup \{(aba, aab)\}$.\\
Note that $bawba\rightarrow_R bawab$: if $w=\lambda$ or $w[ \vert w\vert]=a$, apply the (rewriting) rule
$aba\rightarrow aab$; otherwise, apply the rule $bba\rightarrow bab$. 
Hence, it remains to show that $abwba \Rightarrow_R bawba$.
Now, if $w=\lambda$ or $w[1]=b$, then $abwba \rightarrow_R bawba$, applying the rule $abb\rightarrow bab$.
Otherwise, if $w[1]=a$, write $wb$ as $aw'bw''$ for some $w'\in a^*$ and $w''\in \Sigma^*$.
Using the rule $aab\rightarrow aba$ as many times as necessary (possibly none), it follows that $abwba=abaw'bw''a \Rightarrow_R ababw'w''a$.
Then $ababw'w''a \rightarrow_R abbaw'w''a\rightarrow_R babaw'w''a$, using the rule $bab\rightarrow bba$, followed by the rule $abb \rightarrow bab$. Finally, using the rule $aba\rightarrow aab$ as many times as necessary,
it follows that $babaw'w''a\Rightarrow_R baaw'bw''a=bawba$.

\item[Case 2:] $R\supseteq R_1\cup \{(aab, baa)\}$.\\
First, we show that $abwba \Rightarrow_R bawba$. If $w=\lambda$ or $w[1]=b$, then apply the rule $abb\rightarrow bab$.
Otherwise, if $w[1]=a$, write $wb$ as $aw'bw''$ for some $w'\in a^*$ and $w''\in \Sigma^*$.
Using the rule $aab\rightarrow baa$ as many times as necessary and then possibly the rule $bab\rightarrow bba$ once (if $\vert w'\vert$ is even),
it follows that $abwba=abaw'bw''a \Rightarrow_R abbaw'w''a$. Then $abbaw'w''a \rightarrow_R bbaaw'w''a$, applying the rule $abb\rightarrow bba$. Using possibly the rule $bba\rightarrow bab$ once (if $\vert w'\vert$ is odd) and then the rule $baa\rightarrow aab$ as many times as necessary,
it follows that $bbaaw'w''a\Rightarrow_R baaw'bw''a=bawba$.

Similarly, it can be shown that $bawba \Rightarrow_R bawab$.
If $w[\vert w\vert]=b$, then $bawba \rightarrow_R bawab$, applying the rule $bba\rightarrow bab$.
Otherwise, if $w=\lambda$ or $w[\vert w\vert]=a$, then write $baw$ as $w'baw''$ for some $w'\in \Sigma^*$ and $w''\in a^*$.
Using the rule $aab\rightarrow baa$ as many times as necessary and then possibly the rule $bab\rightarrow bba$ once (if $\vert w''\vert$ is even),
it follows that $bawba=w'baw''ba \Rightarrow_R w'bbaw''a$. Then using possibly the rule $bba\rightarrow bab$ once (if $\vert w'\vert$ is odd) and then the rule $baa\rightarrow aab$ as many times as necessary, it follows that $w'bbaw''a\Rightarrow_R w'baw''ab=bawab$.

\item[Case 3:] $R \supseteq R_1\cup \{(baa, aba)\}$.\\
This case is similar to Case~1.  \qedhere
\end{itemize}
\end{proof}

\begin{remark}
If $\vert R\cap R_i\vert =1$ for both $i=1,2$, then $(\Sigma, R)$ is not a Parikh complete Thue system. Hence, $(\Sigma,R,\{ab\})$ is not a complete Parikh rewriting system.
\end{remark}

Suppose $\mathfrak{P}=(\Sigma, R, \{ab\})$, where $ R=R_1\cup R_2$,  and $\mathfrak{Q}=(\Sigma,\{(ab,ba)\}, \{ab\})$.
Then $\mathfrak{P}$ and $\mathfrak{Q}$ are equivalent as they are both sound and complete. 
However, we will show that $\overset{\textnormal{irr}}{\Rightarrow}_{\mathfrak{P}}\neq \overset{\textnormal{irr}}{\Rightarrow}_{\mathfrak{Q}}$. 
Clearly, $w= \boldsymbol{bba}aabaab \rightarrow_R    abb\boldsymbol{aab} aab \rightarrow_R abbaba\boldsymbol{aab}\rightarrow_R    abbabaaba= w' $. Since $w$ and $w'$ differ in six positions and every rewriting rule of $\mathfrak{P}$ affects exactly two positions, it follows that $\dist_{R}(w,w')=3$. Hence, $w\Rightarrow_{\mathfrak{P}} w'$ must be irreducible because no (irreducible) transformation of $\mathfrak{P}$ has order one.
However, by Theorem~\ref{3012a}, $w\overset{\text{irr}}{\nRightarrow}_{\mathfrak{Q}} w'$.

Although $w\overset{\text{irr}}{\Rightarrow}_{\mathfrak{P}} w'$ has order three, $w$ can be transformed into $w'$ via a sequence of irreducible
transformations of order two, namely 
$$w=\boldsymbol{bba}\mathbf{aab}aab \overset{\text{irr}}{\Rightarrow}_{\mathfrak{P}} abb\boldsymbol{baa}\mathbf{aab} \overset{\text{irr}}{\Rightarrow}_{\mathfrak{P}}abbabaaba=w'.$$
In fact, since $(\Sigma, R')$ is Parikh complete, where
$R'=\{\,  (abxba,baxab)\mid x\in \Sigma^*\,\}$,
 and every direct transformation $w\rightarrow_{R'} w'$ is an irreducible transformation of $\mathfrak{P}$ of order two, it implies that every non-identity transformation of $\mathfrak{P}$ can be expressed as a sequence of irreducible transformations of order two. Therefore, it raises the question whether the  irreducible transformations of order at most two of a complete Parikh rewriting system  are sufficient to form a Parikh complete Thue system. This will be answered in the next section.


\section{Salomaa's Parikh Rewriting System}\label{2703a}

For this section, suppose $\Sigma=\{a<b<c\}$ is a fixed ordered alphabet. Let $\mathfrak{P}=( \Sigma,   R   ,\{abc\}     )$ be a fixed Parikh rewriting system, where
$$R=\{ (ac, ca)\}  \cup \{ \, (abxba, baxab) \mid x\in \Sigma^*\,\}\cup\{\, (bcxcb,cbxbc) \mid x\in \Sigma^*   \, \}. $$
 Recall that $\mathfrak{P}$ is  sound and complete.



Since $abc$ is the unique counter for $\mathfrak{P}$, when we write $w\overset{+3}{\rightarrow}_R w'$, the number above $\rightarrow_R$ indicates the corresponding change in the number of occurrences of the subword $abc$, that is, $\vert w'\vert_{abc}=\vert w\vert_{abc}+3$. Clearly, 
if $w=w_0\overset{t_1}{\rightarrow}_R w_1 \overset{t_2}{\rightarrow}_R w_2 \overset{t_3}{\rightarrow}_R   \dotsb \overset{t_n}{\rightarrow}_R w_n=w'$
and $w \Rightarrow_{\mathfrak{P}} w'$ (so $\vert w\vert_{abc}=\vert w'\vert_{abc}$), then $\sum_{i=1}^n t_i=0$ must be zero.

\begin{remark}\label{0903a}
Note that each rewriting rule of $\mathfrak{P}$ affects at most four letters.
Hence, if $w,w'\in \Sigma^*$ and $\dist(w,w')=n$, then $w, w'$ can differ in at most $4n$ positions. 
Conversely, if $w, w'$ differ in $4n$ positions and
$w=w_0\rightarrow_R w_1 \rightarrow_R w_2 \rightarrow_R \dotsb \rightarrow_R w_n=w'$ for some $w_0,w_1, \dotsc, w_n \in \Sigma^*$, then 
$\dist(w,w')=n$. 
\end{remark}

\begin{example}
Consider the following sequence of direct transformations. The affected letters in each step are highlighted in bold.
\begin{align*}
w=abc\boldsymbol{ba}bacababcbabac\boldsymbol{ab} \overset{+3}{\rightarrow}_R & \,\boldsymbol{ab}cab\boldsymbol{ba}cababcbabacba\\
\overset{-1}{\rightarrow}_R &\, bacababc    \boldsymbol{ab}    abc \boldsymbol{ba}  ba   cba \\
\overset{-1}{\rightarrow}_R &\, bacababc   ba   \boldsymbol{ab}          c  ab   \boldsymbol{ba}           cba\\
\overset{-1}{\rightarrow}_R &\, bacababcba  ba   c   ab ab   cba=w'
\end{align*}
By Remark~\ref{0903a}, $\dist(w,w')=4$ as $w$ and $w'$ differ in 16 positions.
Note that no proper combination of the values above the arrows sums to zero. However, $w \Rightarrow_{\mathfrak{P}}w'$ is reducible 
because
\begin{align*}
w=\boldsymbol{ab}c\boldsymbol{ba}\mathbf{ba}c\mathbf{ab}abcbabacab \Rightarrow_{\mathfrak{P}} & \, bacababcba\boldsymbol{ab}c  \boldsymbol{ba}      \mathbf{ba}  c \mathbf{ab}=w''\\
   \Rightarrow_{\mathfrak{P}} & \, bacababcba  ba    c  ab  ab   c ba   =w'
\end{align*}
 and $\dist(w,w'')+\dist(w'',w')=2+2=\dist(w,w')$.  
\end{example}



The following technical lemma will be useful in determining $\dist_R(w,w')$ in more complicated scenarios.

\begin{lemma}\label{1303a}
Suppose $ R_1=\{ \, (abxba, baxab) \mid x\in \Sigma^*\,     \}$, $R_2= \{\, (bcxcb, cbxbc) \mid x\in \Sigma^*\, \}$, and  $R_3=\{(ac,ca)\}$.
If $w,w'\in \Sigma^*$ and  $w\Rightarrow_{R} w'$, then
\begin{enumerate}
\item $\pi_{a,b}(w)\Rightarrow_{R_1} \pi_{a,b}(w')$, $\pi_{b,c}(w)\Rightarrow_{R_2} \pi_{b,c}(w')$, and $\pi_{a,c}(w)\Rightarrow_{R_3} \pi_{a,c}(w')$;
\smallskip
\item $\dist_R(w,w')\geq  \dist_{R_1}( \pi_{a,b}(w),\pi_{a,b}(w'))+\dist_{R_2}(  \pi_{b,c}(w), \pi_{b,c}(w'))$\\
\phantom{$\dist_R(w,w')\geq   $   }$+\dist_{R_3}(  \pi_{a,c}(w), \pi_{a,c}(w'))$.
\end{enumerate}
\end{lemma}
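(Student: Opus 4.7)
The plan is to argue by induction on $\dist_R(w,w')$, handling both parts simultaneously. The crux of the argument is a local observation about a single direct transformation: each rewriting rule of $R$ belongs to exactly one of the sub-systems $R_1$, $R_2$, $R_3$, and when we pass to the three projections $\pi_{a,b}$, $\pi_{b,c}$, $\pi_{a,c}$, the rule induces a direct transformation in the \emph{matching} sub-system while leaving the other two projections pointwise identical.

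More concretely, I would first verify by direct substitution the following. If $w \rightarrow_R w'$ via the rule $(ac,ca) \in R_3$ (or its reverse), then $\pi_{a,b}(w) = \pi_{a,b}(w')$, $\pi_{b,c}(w) = \pi_{b,c}(w')$, and $\pi_{a,c}(w) \rightarrow_{R_3} \pi_{a,c}(w')$. If $w \rightarrow_R w'$ via $(abxba, baxab) \in R_1$ (or its reverse), then $\pi_{a,b}(abxba) = ab\,\pi_{a,b}(x)\,ba$ and $\pi_{a,b}(baxab) = ba\,\pi_{a,b}(x)\,ab$ form a pair in $R_1$, giving $\pi_{a,b}(w) \rightarrow_{R_1} \pi_{a,b}(w')$, while $\pi_{b,c}(abxba) = b\,\pi_{b,c}(x)\,b = \pi_{b,c}(baxab)$ and $\pi_{a,c}(abxba) = a\,\pi_{a,c}(x)\,a = \pi_{a,c}(baxab)$ show that the other two projections are unchanged. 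The case of $R_2$ is treated symmetrically.

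Granted this local observation, part (1) is immediate: given any chain $w = w_0 \rightarrow_R w_1 \rightarrow_R \dotsb \rightarrow_R w_n = w'$, applying each of the three projections produces a chain (after collapsing identity steps) that witnesses $\pi_{a,b}(w) \Rightarrow_{R_1} \pi_{a,b}(w')$, $\pi_{b,c}(w) \Rightarrow_{R_2} \pi_{b,c}(w')$, and $\pi_{a,c}(w) \Rightarrow_{R_3} \pi_{a,c}(w')$. For part (2), I would choose a chain of length $n = \dist_R(w,w')$ realizing the distance. Each of its $n$ steps contributes exactly one direct step to the $R_i$-chain for precisely one index $i$ and none to the other two, so if $n_i$ denotes the number of steps contributing to the $R_i$-chain, then $n_1 + n_2 + n_3 = n$ and $\dist_{R_i}$ of the corresponding projection pair is at most $n_i$. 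Summing over $i$ yields the required inequality.

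I do not anticipate any real obstacle. The only step demanding care is the local observation, and in particular the verification that a rule from one $R_i$ genuinely acts as the identity on the other two projections; this is a short case analysis across three families of rules, and the three cases are essentially symmetric once the computation for $R_1$ is laid out.
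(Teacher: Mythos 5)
Your proposal is correct and follows essentially the same route as the paper: project a distance-realizing chain under each of $\pi_{a,b}$, $\pi_{b,c}$, $\pi_{a,c}$, observe that each direct step survives as a direct step in exactly one projected chain (the one matching its sub-system $R_i$) and acts as the identity on the other two, and then count steps using the disjointness of $R_1$, $R_2$, $R_3$. Your explicit verification of the local observation (e.g.\ $\pi_{b,c}(abxba)=b\,\pi_{b,c}(x)\,b=\pi_{b,c}(baxab)$) is merely a more detailed writing-out of what the paper dismisses as ``it should be clear that''.
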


\begin{proof}
Suppose $w=w_0 \rightarrow_R w_1 \rightarrow_R w_2 \rightarrow_R \dotsb \rightarrow_R w_{\dist_R(w,w')}=w'$ for some $w_0,w_1, \dotsc, w_{ \dist_R(w,w')  }\in \Sigma^*$.
Notice that exactly  the  rewriting rules induced by elements of $R_1$ would affect the relative positions of the $a$'s and $b$'s in  $w$. 
Suppose $i_1 ,i_2, \dotsc,i_p$ is an increasing enumeration of all the $i$'s such that the direct transformation $w_{i-1} \rightarrow_R w_{i}$ is carried out using a rewriting rule induced by an element of $R_1$. If no such $i$ exists, then $\pi_{a,b}(w)= \pi_{a,b}(w')$ and by definition $\pi_{a,b}(w)\Rightarrow_{R_1} \pi_{a,b}(w')$. Otherwise, it should be clear that 
$$\pi_{a,b}(w) \rightarrow_{R_1} \pi_{a,b}(w_{i_1}) \rightarrow_{R_1} \pi_{a,b}(w_{i_2}) \rightarrow_{R_1}\dotsb \rightarrow_{R_1} \pi_{a,b}(w_{i_p})= \pi_{a,b}(w').$$
Therefore, $\pi_{a,b}(w)\Rightarrow_{R_1} \pi_{a,b}(w')$ and $p \geq \dist_{R_1}( \pi_{a,b}(w),\pi_{a,b}(w'))$. 
Similarly, $\pi_{b,c}(w)\Rightarrow_{R_2} \pi_{b,c}(w')$ and $\pi_{a,c}(w)\Rightarrow_{R_3} \pi_{a,c}(w')$. The second part follows from the proof of the first part as $R_1$, $R_2$, and $R_3$ are disjoint.
\end{proof}

\begin{remark}
The conclusion stated in Lemma~\ref{1303a} (2) cannot be strengthen to equality, not even assuming $w\equiv_M w'$. For example,
let $w=bcacabcabbca$ and $w'=cabbcabcacab$. Note that $w\Rightarrow_R w'$ and $w\equiv_M w'$.  However, only the rewriting rules $ac\rightarrow ca$ or $ca\rightarrow ac$ can be applied on $w$ but $\pi_{a,c}(w)=\pi_{a,c}(w')=cacacaca$. Therefore, from the proof of Lemma~\ref{1303a},  we see that $\dist_R(w,w')$ is strictly greater than the sum of the  corresponding distances on the right. 
\end{remark}

\begin{example}
Since $w=ab\boldsymbol{bc}a\boldsymbol{cb} \overset{+1}{\rightarrow}_R \boldsymbol{ab}c\boldsymbol{ba}bc \overset{-1}{\rightarrow}_R   bacabbc=w'$,
the transformation $w\Rightarrow_{\mathfrak{P}}w'$ is irreducible with order 2.
In this case, the two words happen to be \mbox{$ME$-equivalent}, as 
$$w=abb\boldsymbol{ca}cb \overset{0}{\rightarrow}_R \boldsymbol{abba}ccb \overset{0}{\rightarrow}_R baa\boldsymbol{bccb} \overset{0}{\rightarrow}_R ba\boldsymbol{ac}bbc=w'' \overset{0}{\rightarrow}_R bacabbc=w'.$$
Using this, we can address the uniqueness issue that was raised after Theorem~\ref{2011a}. By Lemma~\ref{1303a},  $\dist( w,w'')\geq 3$ and hence it must be three. However, $w \Rightarrow_{\mathfrak{P}} w''$ can be expressed as two sequences of irreducible transformations (of distinct length) satisfying the conclusion of Theorem~\ref{2011a}, namely, $w \overset{\text{irr}}{\Rightarrow}_{\mathfrak{P}} w' \overset{\text{irr}}{\Rightarrow}_{\mathfrak{P}} w'' $
and  $w  \overset{\text{irr}}{\Rightarrow}_{\mathfrak{P}} abbaccb \overset{\text{irr}}{\Rightarrow}_{\mathfrak{P}} baabccb  \overset{\text{irr}}{\Rightarrow}_{\mathfrak{P}} w''$. 
\end{example}

\begin{theorem}\label{0301a}
For every positive integer $n$, there exist $w,w'\in \Sigma^*$ such that $w \Rightarrow_{\mathfrak{P} }w'$ is irreducible with order $n$.
\end{theorem}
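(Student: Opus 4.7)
My plan is to exhibit, for each $n \geq 1$, an explicit pair $(w_n,w_n')$ together with a length-$n$ rewriting sequence realizing $w_n \Rightarrow_{\mathfrak{P}} w_n'$, and then to verify both that $\dist_R(w_n,w_n') = n$ and that along every length-$n$ rewriting sequence from $w_n$ to $w_n'$, each intermediate word has $|\cdot|_{abc}$ distinct from $|w_n|_{abc}$. Combined with Theorem~\ref{2011a}, these facts force every decomposition $w_n \Rightarrow_{\mathfrak{P}} u \Rightarrow_{\mathfrak{P}} w_n'$ with matching distances to factor through a geodesic intermediate, and every such intermediate fails the counter condition; hence the transformation is irreducible of order $n$.

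The base cases are immediate: $w_1 = ac$, $w_1' = ca$ gives order $1$ by Remark~\ref{1228c}, and the example preceding the theorem ($w_2 = abbcacb$, $w_2' = bacabbc$) handles $n = 2$. For the general case, the idea is to build $w_n$ by nesting or concatenating modifications of the order-$2$ module so as to force $n$ successive rewrites whose cumulative effect on $|\cdot|_{abc}$ follows a fixed sign pattern with all nontrivial partial sums nonzero. A natural template is to choose $w_n$ so that the three binary projections, via Theorem~\ref{3012a} and its $\{b,c\}$-analog, force
\[
\dist_R(w_n,w_n') \;\geq\; \dist_{R_1}\!\bigl(\pi_{a,b}(w_n),\pi_{a,b}(w_n')\bigr) + \dist_{R_2}\!\bigl(\pi_{b,c}(w_n),\pi_{b,c}(w_n')\bigr) + \dist_{R_3}\!\bigl(\pi_{a,c}(w_n),\pi_{a,c}(w_n')\bigr) = n
\]
by Lemma~\ref{1303a}, while an explicit length-$n$ rewriting witness gives the matching upper bound.

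The irreducibility step is the main obstacle. To handle it, I would prove a normalization lemma: any geodesic from $w_n$ to $w_n'$ can be permuted, by repeatedly commuting non-overlapping direct transformations, into a canonical order in which all counter-raising rules precede all counter-lowering rules (or vice versa). Once this is in place, verifying that no partial sum of counter changes returns to $|w_n|_{abc}$ strictly before the endpoint reduces to a single computation on the canonical sequence. The projection bound of Lemma~\ref{1303a} is the crucial tool here: it pins down how many applications of each of the three rule families must occur on any geodesic, which in turn tightly constrains the interleavings and makes the commutation arguments tractable.
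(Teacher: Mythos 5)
Your high-level strategy---produce a pair $(w,w')$ with prescribed $\dist_R(w,w')$ such that on every geodesic from $w$ to $w'$ no proper partial sum of the counter changes vanishes---is exactly the right criterion, and it is the one the paper implements. But as written the proposal has two genuine gaps. First, the theorem is an existence statement and you never actually exhibit the words for $n\geq 3$: ``nesting or concatenating modifications of the order-$2$ module'' is a desideratum, not a construction, and nothing is verified beyond $n\leq 2$. The paper takes $w=a^nbcba^nac^nab$ and $w'=ba^nca^nabc^nba$, realized by $n$ applications of $abxba\rightarrow baxab$ (each with counter change $-1$) followed by one application of $baxab\rightarrow abxba$ (counter change $+n$), giving an irreducible transformation of order $n+1$; order $1$ is supplied by $ac\Rightarrow_{\mathfrak{P}}ca$.

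Second, and more seriously, the normalization lemma you lean on for the irreducibility step is unjustified and, for the words that actually arise, false: only non-overlapping direct transformations can be commuted, and on the relevant geodesics consecutive applications of $abxba\leftrightarrow baxab$ overlap (they share the two migrating $b$'s), so they cannot be swapped; unlike the binary rule $ab\leftrightarrow ba$ exploited in Theorem~\ref{3012a}, overlapping applications of the long rules need not cancel, so you cannot in general sort a geodesic into ``all counter-raising steps, then all counter-lowering steps.'' The paper sidesteps commutation entirely with an invariant: the number of $a$'s between the first two $b$'s is $0$ in $w$ and $2n+1$ in $w'$, and each rewriting rule changes it by at most $2$; this forces $\dist_R(w,w')=n+1$ and, more importantly, forces \emph{every} geodesic to consist of exactly $n$ steps of counter change $-1$ and one step of counter change $+n$ in some order, whence every proper partial sum equals $-d$ or $n+1-d$ and is never $0$. (Note also that Lemma~\ref{1303a} does not by itself pin down the distance here: for the paper's words the $\{b,c\}$- and $\{a,c\}$-projections of $w$ and $w'$ coincide, so the lower bound must still come from an invariant argument inside the $\{a,b\}$-projection.) You would need to replace the normalization step with an argument of this kind before the proof is complete.
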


\begin{proof}
The transformation $ac\Rightarrow_{\mathfrak{P}} ca$ is irreducible with order one.
For every positive integer $n$, consider the following sequence of direct transformations:
\begin{multline*}
w=  a^nbcba^nac^nab \overset{-1}{\rightarrow}_R a^{n-1}bacaba^{n-1}ac^nab\overset{-1}{\rightarrow}_R \dotsb\\
\overset{-1}{\rightarrow}_R ba^nca^nbac^nab \overset{+n}{\rightarrow}_R  ba^nca^nabc^nba  = w'.
\end{multline*}
Since the number of $a$ between the first two $b$ in $w$ is zero while that number is $2n+1$ in $w'$, it follows that 
$\dist(w,w')=n+1$. In fact, if $w=w_0 \overset{i_1}{\rightarrow}_R w_1 \overset{i_2}{\rightarrow}_R \dotsb \overset{i_{n+1}}{\rightarrow}_R w_{n+1}=w'$
for some $w_0,w_1,\dotsc, w_{n+1}\in \Sigma^*$, then
\begin{enumerate}
\item each except one of the direct transformations applies the rewriting rule of the form $abxba\rightarrow baxab$ for some $x\in \Sigma^*$ such that the first and the second $b$ in the corresponding word are affected;
\item exactly one of the direct transformations applies the rewriting rule
of the form $baxab\rightarrow abxba$ for some $x\in \Sigma^*$ such that the second and the third $b$ in the corresponding word are affected.
\end{enumerate}
This means that every $i_k$ is $-1$ except one that is $+n$.
Assume $w\Rightarrow_{\mathfrak{P}} w'$ is reducible. Then choose $w'' \notin \{w,w'\}$ 
such that $w\Rightarrow_{\mathfrak{P}} w''\Rightarrow_{\mathfrak{P}} w'$ and $\dist(w,w'')+\dist(w'',w')=n+1$.
Let $d=\dist(w,w'')$. Then 
$$w=w_0 \overset{i_1}{\rightarrow}_R w_1 \overset{i_2}{\rightarrow}_R \dotsb \overset{i_d}{\rightarrow}_R w_d=w''\overset{ i_{d+1}}{\rightarrow}_R\dotsc   \overset{i_{n+1}}{\rightarrow}_R w_{n+1}=w'$$
for some $w_0, w_1, \dotsc, w_{n+1}\in \Sigma^*$ such that $i_1+i_2+\dotsb+i_d=0$. However, by the observation above, 
$i_1+i_2+\dotsb+i_d$ is either $-d$ or $n+1-d$, a contradiction.
\end{proof}

Although Theorem~\ref{0301a} shows the existence of irreducible transformations of arbitrarily large order, 
as pointed out at the end of Section~\ref{1703a}, it remains to be seen whether irreducible transformations of large order are indeed indispensable.
Our next theorem partially answers this question positively.

\begin{theorem}\label{0701f}
Irreducible transformations of $\mathfrak{P}$ of order three
that cannot be expressed as a  sequence of irreducible transformations of order at most two exist.
\end{theorem}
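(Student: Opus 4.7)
My plan is to exhibit the order-three irreducible transformation from Theorem~\ref{0301a} with $n = 2$ and show it admits no decomposition into irreducible transformations of order at most two. Concretely, take $w = aabcbaaaccab$ and $w' = baacaaabccba$. Theorem~\ref{0301a} already yields $w \overset{\textnormal{irr}}{\Rightarrow}_{\mathfrak{P}} w'$ with $\dist_R(w, w') = n+1 = 3$.

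Assume for contradiction that there is a decomposition
$$w = v_0 \overset{\textnormal{irr}}{\Rightarrow}_{\mathfrak{P}} v_1 \overset{\textnormal{irr}}{\Rightarrow}_{\mathfrak{P}} \dotsb \overset{\textnormal{irr}}{\Rightarrow}_{\mathfrak{P}} v_j = w'$$
in which every transition has order at most two. By Theorem~\ref{2011a} the orders sum to $\dist_R(w, w') = 3$, forcing $j \in \{2, 3\}$ and the existence of at least one intermediate $v_i$ with $0 < i < j$. An iterated application of the triangle inequality (together with equality in the sum) yields $\dist_R(w, v_i) + \dist_R(v_i, w') = 3$, so every such $v_i$ sits on some shortest $R$-path from $w$ to $w'$. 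Moreover, since $\{abc\}$ is the only counter of $\mathfrak{P}$ and each $\Rightarrow_{\mathfrak{P}}$ step preserves $\vert \cdot\vert_{abc}$, an easy induction along the chain gives $\vert v_i\vert_{abc} = \vert w\vert_{abc}$ for every $i$.

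To finish, I would invoke the concluding argument in the proof of Theorem~\ref{0301a}, which shows that for any shortest sequence $w = u_0 \rightarrow_R u_1 \rightarrow_R u_2 \rightarrow_R u_3 = w'$, the cumulative counter change after $d$ direct transformations equals either $-d$ or $(n+1) - d = 3 - d$. Neither of these values vanishes when $d \in \{1, 2\}$, so no strictly-intermediate word on any shortest $R$-path from $w$ to $w'$ can carry the same $\vert \cdot\vert_{abc}$-count as $w$. This directly contradicts the conclusion of the previous paragraph, so the envisaged decomposition cannot exist.

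The one step that warrants care is the appeal to the dichotomy $\{-d,\; 3-d\}$ for \emph{every} shortest sequence, not merely the one exhibited in the proof of Theorem~\ref{0301a}. This is not an issue because that dichotomy is derived there from purely structural features of $w$ and $w'$: the lone $c$ trapped between the first two $b$'s of $w$ and the block $c^n$ trapped between the second and third $b$'s force each of the $n$ rules of type $abxba \to baxab$ to have $\vert x\vert_c = 1$ and the unique rule of type $baxab \to abxba$ to have $\vert x\vert_c = n$. Thus the counter changes along any shortest path must form a permutation of $(-1, -1, +2)$, and the claim follows.
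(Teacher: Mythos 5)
Your choice of witness, $aabcbaaaccab \Rightarrow_{\mathfrak{P}} baacaaabccba$, is the same as the paper's, but there is a genuine gap in your exclusion argument. The step ``by Theorem~\ref{2011a} the orders sum to $\dist_R(w,w')=3$, forcing $j\in\{2,3\}$'' is not justified: Theorem~\ref{2011a} asserts that \emph{some} decomposition into irreducible transformations satisfies the additivity $\dist_R(w,w')=\sum_i \dist_R(v_{i-1},v_i)$; it does not assert that \emph{every} chain of irreducible transformations from $w$ to $w'$ does. The theorem you are proving quantifies over arbitrary sequences of irreducible transformations of order at most two, with no geodesic requirement --- indeed, right after this theorem the paper exhibits an order-three irreducible transformation $abcbcbacab\Rightarrow_{\mathfrak{P}}bacabcbcba$ that \emph{is} expressible as a chain of six order-at-most-two irreducible transformations, whose orders sum to far more than its distance of three. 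Your argument only rules out decompositions whose intermediate words lie on a shortest $R$-path (where the $(-1,-1,+2)$ counter dichotomy from the proof of Theorem~\ref{0301a} applies); a non-geodesic chain of order-one and order-two irreducible transformations is left entirely open.

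The paper closes exactly this hole by a reachability argument: it first shows that every word obtainable from $aabcbaaaccab$ by irreducible transformations of order one has the form $aabcb\,u\,b$ with $\vert u\vert_a=4$, $\vert u\vert_b=0$, $\vert u\vert_c=2$, and then, by a four-case analysis on the pair of rewriting rules constituting a putative order-two irreducible transformation, that no order-two irreducible transformation applies to any word of that form. Hence the set of words reachable by irreducible transformations of order at most two never leaves that family and in particular never reaches $baacaaabccba$. To repair your proof you would need a closure argument of this kind (or some other invariant preserved by \emph{all} order-at-most-two irreducible transformations), not the geodesic counter bookkeeping imported from Theorem~\ref{0301a}.
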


\begin{proof}
(The readers may wish to verify the result on their own, perhaps aided by a computer, rather than following our detailed case by case proof here.) By the proof of Theorem~\ref{0301a}, $aabcbaaaccab\Rightarrow_{\mathfrak{P}} baacaaabccba$ is an irreducible transformation of order three.
We claim that it cannot be expressed as a sequence of irreducible transformations of order at most two. 
First of all, suppose $w$ is obtained from $aabcbaaaccab$ using a sequence of irreducible transformations of order one. 
Then it should be clear that $w$ must be $aabcbub$ for some $u\in \Sigma^*$ such that $\vert u\vert_a=4$, $\vert u\vert_b=0$, and $\vert u\vert_c=2$.
For easier visualization and brevity, we simply represent this by 
$$w=aabcb\underbrace{u}_{4a, 2c}b.$$
It suffices to show that no irreducible transformation of order two can be applied to $w$. We argue by contradiction.
Assume $w\Rightarrow_{\mathfrak{P}} w'$ is irreducible with order two for some $w'\in \Sigma^*$. Then for some $w''\notin\{w,w'\}$,
$$w  \overset{p}{\rightarrow}_R w'' \overset{q}{\rightarrow}_R w', \text{ where } 0\neq p=-q.$$
For the purpose of referencing, let us refer to the rewriting rule employed by $w  \overset{p}{\rightarrow}_R w''$ (respectively $w''  \overset{q}{\rightarrow}_R w'$) as Rule I (respectively Rule II).  
Note that if Rule I is either $abxba\rightarrow baxab$ or $cbxbc\rightarrow bcxcb$ for some $x\in \Sigma^*$, then Rule II must be either $bayab\rightarrow abyba$ or $bcycb\rightarrow cbybc$ for some $y\in \Sigma^*$ and vice versa.

\begin{itemize}
\item[Case 1.] Rule I is $abxba\rightarrow baxab$ for some $x\in \Sigma^*$.\\
Then $w = a\boldsymbol{ab}c\boldsymbol{ba} \underbrace{v}_{3a, 2c}  b \overset{-1}{\rightarrow}_R  abacab \underbrace{v}_{3a, 2c}   b=w''$.
If Rule II is $bcycb\rightarrow cbybc$ for some $y\in \Sigma^*$.
Then 
$ w''=  abaca\boldsymbol{bc}aaa\boldsymbol{cb}  \overset{+3}{\rightarrow}_R  abacacbaaabc=w'$.
However, $p+q=-1+3\neq 0$, a contradiction. 
Therefore, Rule II must be  $bayab\rightarrow abyba$ for some $y\in \Sigma^*$.
There are two possible applications of this rule on $w''$ that do not revert $w''$ back to $w$.
Since $\vert v\vert_c=2$, these two applications would increase the counter by at least two. 
Hence, again $p+q\neq 0$. 

\item[Case 2.]  Rule I is $cbxbc\rightarrow bcxcb$ for some $x\in \Sigma^*$.\\
This case cannot happen as this rule cannot be applied on $w$.

\item[Case 3.] Rule I is $baxab\rightarrow abxba$ for some $x\in \Sigma^*$.\\
Then $w = aabc\boldsymbol{ba} \underbrace{v}_{2a, 2c}  \boldsymbol{ab} \overset{+2}{\rightarrow}_R  aabcab \underbrace{v}_{2a, 2c} ba=w''$.
Then Rule II must be $abyba\rightarrow bayab$ for some $y\in \Sigma^*$ as the rule
$cbybc\rightarrow bcycb$ for any $y\in \Sigma^*$ cannot be applied on $w''$.
There are two possible applications of this rule on $w''$ that do not revert $w''$ back to $w$.
These two applications would decrease the counter by  1 and 3 respectively. 
Hence, again $p+q\neq 0$.

\item[Case 4.] Rule I is $bcxcb\rightarrow cbxbc$ for some $x\in \Sigma^*$.\\
There are two possible distinct applications of Rule I on $w$. 
\begin{itemize}
\item $w= aa\boldsymbol{bc}b \underbrace{v}_{4a, 1c}    \boldsymbol{cb} \overset{+4}{\rightarrow}_R  aacbb \underbrace{v}_{4a, 1c} bc=w''$.
\item $w = aabc\boldsymbol{bc}aaaa\boldsymbol{cb} \overset{+4}{\rightarrow}_R  aabccbaaaabc=w''$.
\end{itemize}
If $w''=aacbb \underbrace{v}_{4a, 1c}bc$, then Rule II  must be $cbybc\rightarrow bcycb$ for some $y\in \Sigma^*$
as the rule $abyba\rightarrow bayab$ for any $y\in \Sigma^*$ cannot be applied on $w''$. 
However, this rule cannot account for $w'' \overset{-4}{\rightarrow}_R w'$, unless $w'=w$, which is not the case.
On the other hand, if $w''=aabccbaaaabc$, then \mbox{Rule II} must be $abyba\rightarrow bayab$ for some $y\in \Sigma^*$ to avoid $w''$ being reverted back to $w$.
Thus $w''=a\boldsymbol{ab}cc\boldsymbol{ba}aaabc \overset{-2}{\rightarrow}_R   abaccabaaabc=w'$. Nevertheless, again $p+q\neq 0$.
\qedhere
\end{itemize}
\end{proof}

It should be pointed out that irreducible transformations of $\mathfrak{P}$ of order three that can be expressed as a sequence of irreducible transformations of order at most two do exist.
Without going through the similar tedious arguments, we claim that
$abcbcbacab\Rightarrow_{\mathfrak{P}} bacabcbcba$ is irreducible with order three.
On the other hand, $abcbcba\boldsymbol{ca}b \Rightarrow_{\mathfrak{P}} \boldsymbol{ab}c\mathbf{bc}\boldsymbol{ba}a\mathbf{cb} \Rightarrow_{\mathfrak{P}} bacc\boldsymbol{ba}b\boldsymbol{ab}c \Rightarrow_{\mathfrak{P}} bac\boldsymbol{ca}bbbac \Rightarrow_{\mathfrak{P}} bacacbbb\boldsymbol{ac} \Rightarrow_{\mathfrak{P}} baca\boldsymbol{cb}b\boldsymbol{bc}a
\Rightarrow_{\mathfrak{P}} bacabcbcba$ is a sequence of irreducible transformations of order at most two.

\section{Conclusions}

Parikh rewriting systems are feasible alternatives to Thue systems in the quest for characterization of $M$-equivalence.
As highlighted by Theorem~\ref{1501c}, the irreducible transformations of a sound and complete Parikh rewriting system leads to a Parikh sound and Parikh complete Thue system. It is conceivable that with an ingeniously chosen Parikh rewriting system with a decidable set of rewriting rules, the set of \emph{all} irreducible transformations may have an effective description. While our immediate goal is the ternary alphabet,
this approach may prove to be a universal way of generating Parikh sound and Parikh complete Thue systems for every alphabet.

For the ternary alphabet, the  only Parikh rewriting system $\mathfrak{P}$ that we have analyzed in the previous section is chosen due to its appearance in the literature and its canonicalness. It is possible to give an exhaustive description of all its irreducible transformations of order two or three. However, this task is not carried out explicitly as we believe that not every irreducible transformation of $\mathfrak{P}$ can be expressed
as a sequence of irreducible transformations of order at most three anyway. As the proof of Theorem~\ref{0701f} is by a tedious case analysis, its proof cannot be adapted to analogous results
for higher order. Hence, the following is left as a conjecture.

\begin{conjecture}
Irreducible transformations of $\mathfrak{P}$ of arbitrary large order that cannot be expressed as a sequence of irreduccible transformations of lower order exist.
\end{conjecture}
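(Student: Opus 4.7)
The plan is to verify the conjecture by generalizing the construction from the proof of Theorem~\ref{0301a}. For each positive integer $n$, set $w_n = a^n bcba^{n+1}c^nab$ and $w_n' = ba^nca^nabc^nba$. Theorem~\ref{0301a} already establishes that $w_n \overset{\textnormal{irr}}{\Rightarrow}_{\mathfrak{P}} w_n'$ is an irreducible transformation of order $n+1$. The aim is to show, for arbitrarily large $n$, that this transformation cannot be written as $w_n = u_0 \overset{\textnormal{irr}}{\Rightarrow}_{\mathfrak{P}} u_1 \overset{\textnormal{irr}}{\Rightarrow}_{\mathfrak{P}} \dotsb \overset{\textnormal{irr}}{\Rightarrow}_{\mathfrak{P}} u_k = w_n'$ with every sub-transformation of order at most $n$.

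Suppose such a decomposition exists. The first move is to concatenate the underlying direct paths into a single direct path $w_n = v_0 \rightarrow_R v_1 \rightarrow_R \dotsb \rightarrow_R v_N = w_n'$ in which the $u_i$'s appear as distinguished vertices. At each $u_i$ the cumulative counter change $|v_j|_{abc} - |w_n|_{abc}$ returns to zero, and within each irreducible sub-segment no strictly interior $v_j$ has this property. Theorem~\ref{0301a} already pins down the rigid counter profile $(-1,\dotsc,-1,+n)$ along any shortest path of length $n+1$, by analyzing which rules can be applied and at which positions. The plan is to upgrade this to a statement that covers \emph{all} direct paths (not only the shortest), of the following form: every direct path from $w_n$ to $w_n'$ must contain at least one step whose counter change has absolute value at least $n$, and any irreducible $\mathfrak{P}$-sub-segment that contains such a step necessarily has order at least $n+1$. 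Combining these two facts forces one of the $u_{i-1} \overset{\textnormal{irr}}{\Rightarrow}_{\mathfrak{P}} u_i$ to have order exceeding $n$, yielding the contradiction.

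To constrain the possible paths uniformly in $n$, I would enlist two auxiliary invariants beyond the counter. First, the triple of positions of the three $b$'s, viewed as a point in $\mathbb{Z}^3$: each application of an $abxba\leftrightarrow baxab$ or $bcxcb\leftrightarrow cbxbc$ rule shifts exactly two of these coordinates by $\pm 1$ in opposite directions, while the rule $ac\leftrightarrow ca$ leaves them fixed. Consequently the $b$-configuration evolves by a constrained integer walk whose total displacement en route to $w_n'$ is fixed, forcing a lower bound on the number of rule applications of the $ab/ba$- and $bc/cb$-types. Second, Lemma~\ref{1303a}(2) provides lower bounds on $\dist_R(v_j, w_n')$ via the three projected distances to $\pi_{a,b}(w_n')$, $\pi_{b,c}(w_n')$ and $\pi_{a,c}(w_n')$, which restrict how much a path can "detour" before the remaining work to $w_n'$ exhausts the allowed sub-segment budget $n$. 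Together these invariants should localize the unavoidable large-counter-change step inside some single irreducible sub-segment, whose order is then at least $n+1$.

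The main obstacle, as the authors themselves emphasize, is that the exhaustive case analysis of Theorem~\ref{0701f} does not scale: both the number of rules applicable at $w_n$ and the space of reachable intermediate words grow combinatorially with $n$. A successful proof must therefore be conceptual and uniform in $n$. The key technical difficulty lies in establishing the second half of the rigidity statement above — namely, that an irreducible sub-segment straddling a large-counter-change step cannot be short — without reverting to enumeration of rule combinations. Whether the $b$-position triple together with the projected-distance invariants is powerful enough to force this, or whether a further conserved quantity must be discovered that is sensitive to the internal $c^n$-block of $w_n$, is the essential open question, and is, I believe, precisely why the statement is posed as a conjecture rather than a theorem.
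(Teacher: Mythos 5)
The statement you are addressing is left in the paper as an open \emph{conjecture}: the author explicitly states that the case analysis used for Theorem~\ref{0701f} (the order-three instance) does not scale, and offers no proof for higher orders. So there is no paper proof to compare against, and your proposal must stand or fall on its own. It does not close the gap, as you yourself concede in your final paragraph: the entire argument funnels into the ``rigidity statement'' --- that (a) every direct $R$-path from $w_n$ to $w_n'$ contains a step whose counter change has absolute value at least $n$, and (b) any irreducible sub-segment containing such a step has order at least $n+1$ --- and neither half is established. Part (b) is particularly fragile as stated: the rules $bcxcb\rightarrow cbxbc$ can produce single steps with arbitrarily large positive counter change (see the $\overset{+4}{\rightarrow}_R$ steps in the proof of Theorem~\ref{0701f}), so an irreducible transformation of order two can in principle pair a $+n$ step with a $-n$ step; ruling this out requires an argument specific to the words reachable from $w_n$, which is exactly the combinatorial explosion you are trying to avoid. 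Part (a) is also unproved: the paper's profile $(-1,\dotsc,-1,+n)$ is asserted only for \emph{shortest} paths, and your extension to all paths is not a consequence of anything shown.

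Two smaller points. First, your reduction step implicitly assumes the direct path chosen inside each irreducible sub-segment $u_{i-1}\Rightarrow_{\mathfrak{P}} u_i$ is a geodesic; only then does an interior vertex with vanishing cumulative counter yield a distance-additive factorization contradicting irreducibility (irreducibility is defined via $\dist_R$-additivity, not via the counter alone). You should say this explicitly. Second, the auxiliary invariants you propose (the $\mathbb{Z}^3$ walk of $b$-positions, and the projected distances from Lemma~\ref{1303a}(2)) are plausible bookkeeping devices, but the paper's own remark after Lemma~\ref{1303a} shows the projected-distance bound can be strict, so it cannot by itself localize the large-counter step. In short: this is a reasonable research plan for attacking the conjecture, not a proof, and the statement remains open.
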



\section*{Acknowledgements}

The author gratefully acknowledge support for this research by a short term grant No. 304/PMATHS/6313077 of Universiti Sains Malaysia.


\end{document}